


\documentclass[a4paper,UKenglish,cleveref,autoref]{lipics-v2021}

\pdfoutput=1 



\bibliographystyle{plainurl}

\title{A faster algorithm for finding Tarski fixed points} 


\author{John Fearnley}{Department of Computer Science, University of Liverpool}{john.fearnley@liverpool.ac.uk}{}{}

\author{D\"om\"ot\"or P\'alv\"olgyi}{MTA-ELTE Lend\"ulet Combinatorial Geometry
Research Group, Institute of Mathematics, E\"otv\"os Lor\'and University (ELTE)}{dom@cs.elte.hu}{}{}

\author{Rahul Savani}{Department of Computer Science, University of Liverpool}{rahul.savani@liverpool.ac.uk}{{https://orcid.org/0000-0003-1262-7831}}{}

\authorrunning{J. Fearnley, D. P\'alv\"olgyi, and R. Savani} 

\Copyright{John Fearnley, D\"om\"ot\"or P\'alv\"olgyi, and Rahul Savani} 

\ccsdesc{Theory of computation~Design and analysis of algorithms}

\keywords{query complexity, Tarski fixed points, total function problem} 

\category{} 


\supplement{}


\acknowledgements{We would like to thank Kousha Etessami, Thomas
Webster, and an anonymous reviewer for pointing out that the proof of Lemma 12
could be drastically simplified from its original version, and we would like to
thank Bal\'azs Keszegh for useful discussions on this topic.}

\nolinenumbers 

\hideLIPIcs  

\EventEditors{John Q. Open and Joan R. Access}
\EventNoEds{2}
\EventLongTitle{42nd Conference on Very Important Topics (CVIT 2016)}
\EventShortTitle{STACS 2021}
\EventAcronym{CVIT}
\EventYear{2016}
\EventDate{December 24--27, 2016}
\EventLocation{Little Whinging, United Kingdom}
\EventLogo{}
\SeriesVolume{42}
\ArticleNo{23}


\newcommand{\midp}{\ensuremath{\textsf{mid}}\xspace}
\newcommand{\midpone}{\ensuremath{\textsf{midone}}\xspace}
\newcommand{\midptwo}{\ensuremath{\textsf{midtwo}}\xspace}
\newcommand{\topp}{\ensuremath{\textsf{top}}\xspace}
\newcommand{\botp}{\ensuremath{\textsf{bot}}\xspace}
\newcommand{\leftp}{\ensuremath{\textsf{left}}\xspace}
\newcommand{\rightp}{\ensuremath{\textsf{right}}\xspace}

\usepackage{amsmath}

\usepackage{relsize}

\usepackage{pgfplots}
\usepackage{pgfplotstable}
\usetikzlibrary{matrix,arrows,shapes,positioning,calc,snakes,decorations.markings}
\usetikzlibrary{fadings,shapes.arrows,shadows}   
\usetikzlibrary{arrows.meta,patterns}
\usepackage{tkz-euclide}

\usetikzlibrary{arrows.meta}
\tikzset{>={Latex[width=3mm,length=2.25mm]}}

\pgfdeclarelayer{background}
\pgfsetlayers{background,main}


\definecolor{lightgray}{gray}{0.9}
\definecolor{medgray}{gray}{0.55}
\definecolor{darkgray}{gray}{0.4}

\definecolor{pastelblue}{rgb}{0.4, 0.5, 0.91}
\definecolor{pastelred}{rgb}{0.9, 0.18, 0.10}

\definecolor{darkpastelgreen}{rgb}{0.01, 0.75, 0.24}
\definecolor{pastelyellow}{rgb}{0.01, 0.75, 0.24}

\definecolor{shade1}{rgb}{0.87, 0.91, 0.92} 
\definecolor{shade2}{rgb}{1, 1, 0.88} 
\definecolor{shade3}{rgb}{0.69, 0.94, 0.92} 
\definecolor{shade4}{rgb}{1, 0.79, 0.78} 

\definecolor{darkpastelblue}{rgb}{0.27, 0.47, 0.70}
\definecolor{darkpastelred}{rgb}{0.59, 0.18, 0.10}
\definecolor{darkpastelbrown}{rgb}{0.39, 0.31, 0.25}
\definecolor{darkpastelmagenta}{rgb}{0.59, 0.44, 0.84}


\tikzset{fontscale/.style = {font=\relsize{#1}}}				
\tikzset{query/.style={draw, circle, fill=black, inner sep=2pt, minimum size=0.05cm}}

\pgfmathsetmacro{\diag}{0.2 + sqrt(1/2)}

\pgfmathsetmacro{\myscale}{0.65}

\pgfmathsetmacro{\lw}{0.9mm}
\pgfmathsetmacro{\ll}{1.2}
\pgfmathsetmacro{\lls}{0.5}

\newcommand{\myuparrow}[2]{
	\draw[->, line width=\lw, color=pastelblue] (#1,#2) -- (#1,#2+\ll);
}

\newcommand{\mydnarrow}[2]{
	\draw[->, line width=\lw, color=pastelblue] (#1,#2) -- (#1,#2-\ll);
}

\newcommand{\myuparrowshort}[2]{
	\draw[->, line width=\lw, color=pastelblue] (#1,#2) -- (#1,#2+\lls);
}

\newcommand{\mydnarrowshort}[2]{
	\draw[->, line width=\lw, color=pastelblue] (#1,#2) -- (#1,#2-\lls);
}

\newcommand{\myleftarrow}[2]{
	\draw[->, line width=\lw, color=pastelred] (#1,#2) -- (#1-\ll,#2);
}

\newcommand{\myrightarrow}[2]{
	\draw[->, line width=\lw, color=pastelred] (#1,#2) -- (#1+\ll,#2);
}

\newcommand{\myleftarrowshort}[2]{
	\draw[->, line width=\lw, color=pastelred] (#1,#2) -- (#1-\lls,#2);
}

\newcommand{\myrightarrowshort}[2]{
	\draw[->, line width=\lw, color=pastelred] (#1,#2) -- (#1+\lls,#2);
}

\newcommand{\backarrow}[3]{
	\draw[->, line width=\lw, color=#3] (#1,#2) -- (#1+\diag,#2+\diag);
}

\newcommand{\forwardarrow}[3]{
	\draw[->, line width=\lw, color=#3] (#1,#2) -- (#1-\diag, #2-\diag);
}

\newcommand{\subfig}[4]{
	\node[label={[yshift=2mm]above:{\Huge #3}}] at (#1,#2) {\usebox#4};
}
\newcommand{\subfigprime}[5]{
	
\node[label={[yshift=2mm]above:{{\renewcommand*{\arraystretch}{2.5} \begin{tabular}{c}\Huge #3\\ \Huge #4 \end{tabular}}}}] at (#1,#2) {\usebox#5};
}

\usepackage{subcaption}

\newcommand{\myparagraph}[1]{\smallskip \noindent \textbf{#1}}

\usepackage{enumitem}
\usepackage{xspace}
\usepackage{todonotes}

\newcommand{\nats}{\mathbb N}

\newcommand{\solnlabel}[1]{\label{sol:#1}}
\newcommand{\solnref}[1]{\ref{sol:#1}}
\newcommand{\blank}{\ensuremath{\mathtt{*}}}

\def\problem#1{{\scshape #1}}
\def\tarski{\problem{Tarski}\xspace}

\DeclareMathOperator{\poly}{poly}

\DeclareMathOperator{\Up}{Up}
\DeclareMathOperator{\Down}{Down}

\DeclareMathOperator{\lat}{Lat}


\begin{document}

\maketitle

\begin{abstract}
Dang et al.\ have given an algorithm that can find a Tarski fixed point in a
$k$-dimensional lattice of width $n$ using $O(\log^{k} n)$ queries~\cite{DQY20}.
Multiple authors have conjectured that this algorithm is optimal~\cite{DQY20,
EPRY20}, and indeed this has been proven for two-dimensional
instances~\cite{EPRY20}. We show that these conjectures are false in dimension
three or higher by giving an $O(\log^2 n)$ query algorithm for the
three-dimensional Tarski problem. We also give a new decomposition theorem for
$k$-dimensional Tarski problems which, in combination with our new algorithm for three
dimensions, gives an $O(\log^{2 \lceil
k/3 \rceil} n)$ query algorithm for the $k$-dimensional problem. 
\end{abstract}

\newpage

\section{Introduction}

Tarski's fixed point theorem states that every order preserving function on a
complete lattice has a greatest and least fixed point~\cite{Tarski55}, and
therefore in particular, every such function has at least one fixed point.
Recently, there has been interest in the complexity of finding
such a fixed point. This is due to its applications, including computing Nash
equilibria of supermodular games and finding the solution of a
simple stochastic game~\cite{EPRY20}. 

Prior work has focused on the complete lattice $L$ defined by a $k$-dimensional grid
of width $n$. Dang, Qi, and Ye~\cite{DQY20} give an algorithm that finds a fixed point of
a function $f : L \rightarrow L$ using $O(\log^k n)$ queries to
$f$.  This algorithm uses recursive binary search, where a $k$-dimensional
problem is solved by making $\log n$ recursive calls on $(k-1)$-dimensional
sub-instances. They conjectured that this algorithm is optimal.

Later work of Etessami, Papadimitriou, Rubinstein, and Yannakakis took the first
step towards proving this~\cite{EPRY20}. They showed that finding a Tarski fixed point in a
two-dimensional lattice requires $\Omega(\log^2 n)$ queries, meaning that the Dang et
al.\ algorithm is indeed optimal in the two-dimensional case. Etessami et al.\
conjectured that the Dang et al. algorithm is optimal for constant $k$, and they
leave as an explicit open problem the question of whether their lower bound can
be extended to dimension three or beyond. 

\myparagraph{Our contribution.}
In this paper we show that, surprisingly, the Dang et al. algorithm is not
optimal in dimension three, or any higher dimension, and so we falsify
the prior conjectures. We do this by giving an algorithm that can find a Tarski
fixed point in three dimensions using $O(\log^2 n)$ queries, thereby
beating the $O(\log^3 n)$ query algorithm of Dang et al. 


The Dang et al.\ algorithm solves a three-dimensional instance by making
recursive calls to find a fixed point of $\log n$ distinct two-dimensional
sub-instances. Our key innovation is to point out that one does not need to find
a fixed point of the two-dimensional sub-instance to make progress. Instead, we
define the concept of an \emph{inner algorithm} (Definition~\ref{def:inner})
that, given a two-dimensional sub-instance, is permitted to return any point
that lies in the up or down set of the three-dimensional instance (defined
formally later). This is a much larger set of points, so whereas finding a fixed
point of a two-dimensional instance requires $\Omega(\log^2 n)$
queries~\cite{EPRY20}, we give a $O(\log n)$ query inner algorithm for
two-dimensional instances. This inner algorithm is quite involved, and is the
main technical contribution of the paper.

We show that, given an inner algorithm for dimension $k-1$, a reasonably
straightforward \emph{outer algorithm} can find a Tarski fixed point by making
$O(k \cdot \log n)$ calls to the inner algorithm. Thus we obtain a $O(\log^2 n)$
query algorithm for the case where $k=3$. We leave as an open problem the
question of whether efficient inner algorithms exist in higher dimensions.

For higher-dimensional instances, we show a decomposition theorem: if
$a$-dimensional Tarski problems can be solved in $q_a$ queries, and
$b$-dimensional Tarski problems can be solved in $q_b$ queries, then $(a\cdot
b)$-dimensional \tarski can be solved in $q_a \cdot (q_b + 2)$ queries. This
then allows us to use our new algorithm for three-dimensional Tarski problems to
obtain an algorithm that solves $k$-dimensional Tarski problems using $O(\log^{2
\lceil k/3 \rceil} n)$ queries, a substantial improvement over the $O(\log^k n)$
algorithm of Dang et al.\cite{DQY20}.

Though we state our results in terms of query complexity for the sake of
simplicity, it should be pointed out that all of our algorithms
run in polynomial time. 
Specifically, our algorithms will run in $O(\poly(\log n, k) \cdot
\log^{2 \lceil k/3 \rceil}n)$ time when the function is presented as a Boolean circuit of size 
$\poly(\log n, k)$.

\myparagraph{Related work.}
Etessami et al.\ also studied the computational complexity of the Tarski
problem~\cite{EPRY20}, showing that the problem lies in 
PPAD and PLS. However, the exact complexity of the problem remains
open. It is not clear whether the problem is
PPAD~$\cap$~PLS-complete~\cite{FGHS20}, or contained in some other lower class such as
EOPL or UEOPL~\cite{FGMS20}.

Tarski's fixed point theorem has been applied in a wide range of settings within
Economics~\cite{Topkis79,MilgromR90,Topkis98}, and in particular to settings
that can be captured by supermodular games, which are in fact equivalent to the
Tarski problem~\cite{EPRY20}. In terms of algorithms, Echenique
\cite{Echenique07} studied the problem of computing all pure equilibria of a
supermodular game, which is at least as hard as finding the greatest or least
fixed point of the Tarski problem, which is itself NP-hard~\cite{EPRY20}.
There have also been several papers that study properties of Tarski fixed
points, such as the complexity of deciding whether a fixed point is
unique~\cite{DQY20,DangY18,DangYe18-tech,DangY20}. The Tarski problem has also
been studied in the setting where the partial order is given by an oracle~\cite{ChangLT08}.

\section{Preliminaries}

\myparagraph{\bf Lattices.}
We work with a complete lattice defined over a
$k$-dimensional grid of points. We define $\lat(n_1, n_2, \dots, n_k)$ to be the
$k$-dimensional lattice with side-lengths given by $n_1, \ldots, n_k$. That is,
$\lat(n_1, n_2, \dots, n_k)$ contains every $x \in \nats^k$ such that $1 \le x_i
\le n_i$ for all $i=1,\ldots,k$.
Throughout, we use $k$ to denote the dimensionality of the
lattice, and $n = \max_{i = 1}^k n_i$ to be the width of the widest dimension.
We use $\preceq$ to denote the natural partial order over this lattice with
$x \preceq y$ if and only if $x, y \in L$ and $x_i \le y_i$ for all $i \le
k$. 

\myparagraph{\bf The Tarski fixed point problem.}
Given a lattice $L$, a function $f : L \rightarrow L$ is \emph{order preserving} if $f(x) \preceq
f(y)$ whenever $x \preceq y$. A point $x \in L$ is \emph{fixed point} of $f$ if
$f(x) = x$. A weak version of Tarski's theorem can be stated as follows. 
\begin{theorem}[\cite{Tarski55}]
Every order preserving function on a complete lattice has a fixed point.
\end{theorem}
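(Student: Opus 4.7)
The plan is to exhibit a fixed point explicitly as the supremum of the set of ``post-fixed'' points. I would define
\[
S = \{ x \in L : x \preceq f(x) \},
\]
the set of points that $f$ maps weakly upward. Since $L$ is a complete lattice, it has a least element $\bot = \bigwedge L$, and trivially $\bot \preceq f(\bot)$, so $\bot \in S$ and $S$ is nonempty. By completeness, the supremum $x^* = \bigvee S$ also exists, and this will be the candidate fixed point.

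Next I would verify $x^* = f(x^*)$ in two steps. For any $x \in S$ we have $x \preceq x^*$, so by order preservation $x \preceq f(x) \preceq f(x^*)$, showing $f(x^*)$ is an upper bound for $S$; hence $x^* \preceq f(x^*)$. Applying $f$ once more gives $f(x^*) \preceq f(f(x^*))$, which means $f(x^*) \in S$, and therefore $f(x^*) \preceq x^*$ since $x^*$ is the least upper bound of $S$. Combining the two inequalities yields $x^* = f(x^*)$.

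There is no substantive obstacle here; the only care needed is to invoke completeness in both of its roles, namely to ensure $\bot$ exists (so that $S$ is nonempty) and to ensure arbitrary suprema exist (so that $\bigvee S$ makes sense). The construction in fact produces the greatest fixed point, and the dual argument applied to $\{ x \in L : f(x) \preceq x \}$ with infimum in place of supremum produces the least fixed point, which is the stronger form of Tarski's theorem alluded to in the introduction. Since the statement to be proved only asks for existence of some fixed point, a single application of the argument above suffices.
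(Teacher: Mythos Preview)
Your argument is the classical Knaster--Tarski proof and is correct. The paper, however, does not prove this theorem at all: it is stated as a cited result from Tarski's original 1955 paper and used as background, so there is no ``paper's own proof'' to compare against. Your proof is exactly the standard one that the citation points to.
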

Thus, we can define a total search problem for Tarski's fixed point theorem.
\begin{definition}[\tarski]
Given a lattice $L$, and a function $f : L \rightarrow L$, find one of:
\begin{enumerate}[label=(T\arabic*), wide=0pt]
\item \solnlabel{T1} A point $x \in L$ such that $f(x) = x$.
\item \solnlabel{T2} Two points $x, y \in L$ such that $x \preceq y$ and $f(x) \not\preceq
f(y)$.
\end{enumerate}
\end{definition}
Solutions of type \solnref{T1} are fixed points of $f$, whereas
solutions of type \solnref{T2} witness that~$f$ is not an order
preserving function. 
By Tarski's theorem, if a function $f$ has no
solutions of type \solnref{T2}, then it must have a solution of type
\solnref{T1}, and so \tarski is a total problem. 

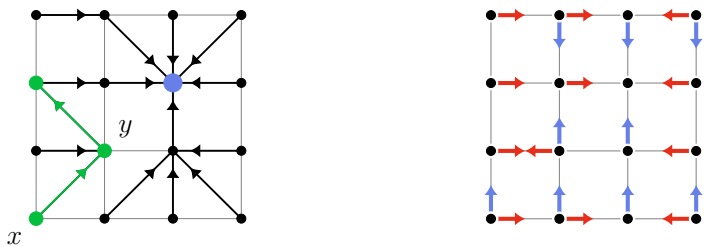
\begin{figure}
\begin{center}
\scalebox{0.6}{\newsavebox\exleft
\begin{lrbox}{\exleft}
\begin{tikzpicture}[scale=1.5]

\tikzset{every node}=[font=\LARGE]

\draw[gray] (0, 0) grid (3, 3);

\begin{scope}[very thick,decoration={
    markings,
    mark=at position 0.75 with {\arrow{>}}}
    ] 
	\draw[postaction={decorate}] (0,0) -- (1,1);
	\draw[postaction={decorate}] (0,1) -- (1,1);
	\draw[postaction={decorate}] (0,2) -- (1,2);
	\draw[postaction={decorate}] (0,3) -- (1,3);
	\draw[postaction={decorate}] (1,0) -- (2,1);
	\draw[postaction={decorate}] (1,1) -- (0,2);
	\draw[postaction={decorate}] (1,2) -- (2,2);
	\draw[postaction={decorate}] (1,3) -- (2,2);
	\draw[postaction={decorate}] (2,0) -- (2,1);
	\draw[postaction={decorate}] (2,1) -- (2,2);
	\draw[postaction={decorate}] (2,3) -- (2,2);
	\draw[postaction={decorate}] (3,0) -- (2,1);
	\draw[postaction={decorate}] (3,1) -- (2,1);
	\draw[postaction={decorate}] (3,2) -- (2,2);
	\draw[postaction={decorate}] (3,3) -- (2,2);

	\draw[postaction={decorate},darkpastelgreen] (0,0) -- (1,1);
	\draw[postaction={decorate},darkpastelgreen] (1,1) -- (0,2);
\end{scope}

\foreach \x in {0,...,3}
\foreach \y in {0,...,3}
	\node[query] at (\x,\y) {};

\node[query,darkpastelgreen,minimum size=3mm,label=below left:$x$] at (0,0) {};
\node[query,darkpastelgreen,minimum size=3mm,label=above right:$y$] at (1,1) {};
\node[query,darkpastelgreen,minimum size=3mm] at (0,2) {};
\node[query,pastelblue,minimum size=4mm] at (2,2) {};

\end{tikzpicture}
\end{lrbox}

\newsavebox\exright
\begin{lrbox}{\exright}
\begin{tikzpicture}[scale=1.5]

\tikzset{every node}=[font=\LARGE]

\draw[gray] (0, 0) grid (3, 3);

\myuparrowshort{0}{0}

\myrightarrowshort{0}{0}
\myrightarrowshort{0}{1}
\myrightarrowshort{0}{2}
\myrightarrowshort{0}{3}

\myuparrowshort{1}{0}
\myuparrowshort{1}{1}
\mydnarrowshort{1}{3}

\myrightarrowshort{1}{0}
\myleftarrowshort{1}{1}
\myrightarrowshort{1}{2}
\myrightarrowshort{1}{3}

\myuparrowshort{2}{0}
\myuparrowshort{2}{1}
\mydnarrowshort{2}{3}

\myuparrowshort{3}{0}
\mydnarrowshort{3}{3}

\myleftarrowshort{3}{0}
\myleftarrowshort{3}{1}
\myleftarrowshort{3}{2}
\myleftarrowshort{3}{3}

\node[query,white,minimum size=3mm,label={[white] below left:$x$}] at (0,0) {};

\foreach \x in {0,...,3}
\foreach \y in {0,...,3} { 
	\node[query,white,minimum size=3mm] at (\x,\y) {};
	\node[query] at (\x,\y) {};
}

\end{tikzpicture}
\end{lrbox}

\begin{tikzpicture}

\subfig{0}{0}{}{\exleft}
\subfig{10}{0}{}{\exright}

\end{tikzpicture}
}
\end{center}
\caption{Left: a \tarski instance. Right: our diagramming notation for
the same instance.}
\label{fig:2dex}
\end{figure}

The left-hand picture in Figure~\ref{fig:2dex} gives an example of a
two-dimensional \tarski instance. The blue point is a fixed point, and so is a
\solnref{T1} solution, while the highlighted green arrows give an
example of an order preservation violation, and so $(x,y)$ is 
a \solnref{T2} solution.

Throughout the paper we will use a diagramming notation, shown on the right in
Figure~\ref{fig:2dex}, that decomposes
the dimensions of the instance. The red arrows correspond to dimension 1, where an
arrow pointing to the left indicates that $f(x)_1 \le x_1$, while an arrow to
the right\footnote{If $x_1 = f(x)_1$ we could use either arrow, but
will clarify in the text whenever this ambiguity matters.} indicates
that $x_1 \le f(x)_1$. Blue arrows do the same thing for dimension $2$, and we
will use green arrows for dimension 3 in the cases where this is relevant.  


\myparagraph{\bf The up and down sets.}
Given a function $f$ over a lattice
$L$, we define 
$\Up(f) = \{ x \in L \; : \; x \preceq f(x) \}$, and
$\Down(f) = \{ x \in L \; : \; f(x) \preceq x \}$.
We call $\Up(f)$, the \emph{up set}, which contains all points in which $f$ goes
up according to the ordering~$\preceq$, and likewise we call $\Down(f)$ the
\emph{down set}. Note that the set of fixed points of $f$ is exactly $\Up(f) \cap
\Down(f)$. 

\myparagraph{\bf Slices.}
A \emph{slice} of the lattice $L$ is defined by a tuple $s = (s_1, s_2, \dots,
s_k)$, where each $s_i \in \mathbb{N} \cup \{ \blank \}$. The idea is that, if
$s_i \ne \blank$, then we fix dimension $i$ of $L$ to be $s_i$, and if $s_i =
\blank$, then we allow dimension $i$ of $L$ to be free. Formally, we define the
sliced lattice 
$L_s = \{ x \in L \; : \; x_i = s_i \text{ whenever } s_i \ne \blank
\}.$
%
We say that a slice is a \emph{principle slice} if it fixes exactly one
dimension and leaves the others free. For example $(1, \blank, \blank)$,
$(\blank, 33, \blank)$, and $(\blank, \blank, 261)$ are all principle slices of a
three-dimensional lattice. 

Given a slice $s$, and a function $f : L \rightarrow L$, we define $f_s : L_s
\rightarrow L_s$ to be the \emph{restriction} of~$f$ to $L_s$. Specifically, for
each $x \in L_s$, we define 
$\left(f_s(x)\right)_i = f(x)_i$ if $s_i = \blank$, and 
$\left(f_s(x)\right)_i = s_i$ otherwise.
This definition projects the function $f$ down onto the slice $s$.

A fact that we will use repeatedly in the paper is that
an order preservation violation in a slice~$s$ is also an order preservation
violation for the whole instance. More formally, if $x, y \in L_s$
satisfy $x \preceq y$ and $f_s(x) \not\preceq f_s(y)$, then we also have $f(x)
\not \preceq f(y)$, since there exists a dimension $i$ such that $f(x)_i =
f_s(x)_i > f_s(y)_i = f(y)_i$.

\myparagraph{\bf Sub-instances.} 
A \emph{sub-instance} of a lattice $L$ is defined by two points $x, y \in L$
that satisfy $x \preceq y$. Informally, the sub-instance defined by $x$ and $y$ is
the lattice containing all points between $x$ and $y$. Formally, we define
$L_{x, y} = \{ a \in L \; : \; x \preceq a \preceq y\}$.




\section{The Outer Algorithm}
\label{sec:outer}

The task of the outer algorithm is to find a solution to the \tarski instance
by making $O(k \cdot \log n)$ calls to the inner algorithm. 
We state our results for dimension $k$, even though we only apply the outer
algorithm with $k=3$, since, in the future, an efficient inner algorithm in
higher dimensions may be found. Formally, an inner algorithm is defined as
follows.

\begin{definition}[Inner algorithm]
\label{def:inner}
An inner algorithm for a \tarski instance takes as input a sub-instance $L_{a,
b}$ with $a \in \Up(f)$ and $b \in \Down(f)$, and a principle slice $s$ of that
sub-instance. It outputs one of the following.
\begin{itemize}
\item A point $x \in L_{a, b} \cap L_s$ such that $x \in \Up(f)$ or $x \in \Down(f)$. 
\item Two points $x, y \in L_{a, b}$ that witness a violation of the order
preservation of $f$.
\end{itemize}
\end{definition}
It is important to understand that here we are looking for points that lie in
the up or down set of the \emph{three-dimensional} instance, a point that lies
in $\Up(f_s)$ but for which $f$ goes down in the third dimension would not
satisfy this criterion.




\myparagraph{\bf The algorithm.}
Throughout the outer algorithm, we will maintain two points $x, y \in L$ with
the invariant that $x \preceq y$ and $x \in \Up(f)$ and $y \in \Down(f)$.  
The following lemma, which is proved in Appendix~\ref{app:tarskinew}, implies
that if $x$ and $y$ satisfy the invariant, then $L_{x, y}$ must contain a
solution to the \tarski problem. This will allow us to focus on smaller and
smaller instances that are guaranteed to contain a solution.

\begin{lemma}
\label{lem:tarskinew}
Let $L$ be a lattice and $f : L \rightarrow L$ be a \tarski instance. If there
are two points $a, b \in L$ satisfying $a \preceq b$, $a \in \Up(f)$, and $b \in
\Down(f)$, then one of the following exists.
\begin{itemize}
\item A point $x \in L_{a, b}$ satisfying $f(x) = x$.
\item Two points $x, y \in L_{a, b}$ satisfying $x \preceq y$ and $f(x)
\not\preceq f(y)$.
\end{itemize}
Moreover, there is an algorithm that finds one of the above using
$O(\sum_{i = 1}^k (a_i - b_i))$ queries.
\end{lemma}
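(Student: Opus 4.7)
I would prove the lemma by a constructive iteration analogous to the standard Kleene-style proof of Tarski's theorem, but instrumented to catch order preservation violations as they arise. The algorithm maintains a current point $x \in L_{a, b}$ with the invariant $x \in \Up(f)$, initialized with $x := a$. At each iteration we query $f(x)$ (one call) and perform a three-way case split.

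First, if $f(x) = x$, we return $x$ as a fixed point. Otherwise the invariant gives $x \prec f(x)$ (strict inequality in at least one coordinate). Second, if $f(x) \not\preceq b$, then some coordinate $i$ satisfies $f(x)_i > b_i$. Because $b \in \Down(f)$ we have $f(b)_i \le b_i < f(x)_i$, so the pair $(x, b)$ witnesses a violation: $x \preceq b$ while $f(x) \not\preceq f(b)$. Third, if $f(x) \preceq b$, we update $x_{\text{new}} := f(x_{\text{old}})$ and loop. To re-establish the invariant, the very first query of the next iteration, computing $f(x_{\text{new}})$, also checks $x_{\text{new}} \in \Up(f)$; if this check fails then $f(x_{\text{new}}) \not\succeq x_{\text{new}} = f(x_{\text{old}})$, so the pair $(x_{\text{old}}, x_{\text{new}})$, with $x_{\text{old}} \preceq x_{\text{new}}$ in $L_{a, b}$, is an order preservation violation which we return. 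Every returned point or pair lies in $L_{a,b}$ by construction.

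\textbf{Termination and query bound.} Each iteration either halts or strictly increases the current $x$ in at least one coordinate while preserving $x \preceq b$. Hence the potential $\sum_{i=1}^k (b_i - x_i) \ge 0$ strictly decreases by at least one each step, bounding the number of iterations by $\sum_{i=1}^k (b_i - a_i)$; with $O(1)$ queries per iteration this gives the stated bound (taking the displayed expression $\sum (a_i - b_i)$ as a typo for $\sum (b_i - a_i)$).

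\textbf{Expected obstacles.} The only subtlety is honestly enforcing the invariant $x \in \Up(f)$ without assuming that $f$ is order preserving: a naive iteration might leave $\Up(f)$, but the observation above shows that any such departure is itself a violation, discovered at no extra query cost. Consequently there is no genuine obstacle; the proof is essentially a careful case analysis built around the monotone iteration from $a$.
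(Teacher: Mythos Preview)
Your proof is correct and follows the same high-level strategy as the paper: iterate upward from $a$ while maintaining the invariant $x\in\Up(f)$, and show that any departure from the invariant or from the box $L_{a,b}$ immediately yields an order preservation violation with a previously visited point (or with $b$). The termination/complexity argument via the potential $\sum_i(b_i-x_i)$ is also the same in spirit.

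The one genuine difference is the step size. The paper first proves an auxiliary lemma (Lemma~\ref{lem:upsetpath}) showing that moving \emph{one unit} in a single coordinate that is strictly increased by $f$ either stays in $\Up(f)$ or produces a violation with the previous point; it then builds a path $a=a_1,a_2,\dots$ with $a_{i+1}=a_i+e_j$. You instead perform the standard Kleene iteration $x\mapsto f(x)$, jumping directly to $f(x)$ at each step. Your version is a bit slicker: it dispenses with the auxiliary lemma and makes the violation detection for leaving $\Up(f)$ a one-liner, since $x_{\text{new}}=f(x_{\text{old}})$ gives $f(x_{\text{new}})\not\succeq x_{\text{new}}=f(x_{\text{old}})$ directly. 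The paper's unit-step version, on the other hand, makes the query count \emph{exact} (each step decreases the potential by exactly one) rather than merely an upper bound, and keeps every intermediate point adjacent to its predecessor, which can be convenient when the lemma is reused on one-dimensional slices later in the paper. Both approaches give the same $O\!\bigl(\sum_i(b_i-a_i)\bigr)$ query bound; your reading of the displayed sum as a sign typo is correct.
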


Initially we set $x = (1, 1, \dots, 1)$, which is the least element, and $y = (n_1, n_2, \dots, n_k)$, which is the greatest element. 
Note that $x \preceq
f(x)$ holds because $x$ is the least element, and likewise $f(y)
\preceq y$ holds because $y$ is the greatest element, so the invariant holds for
these two points.

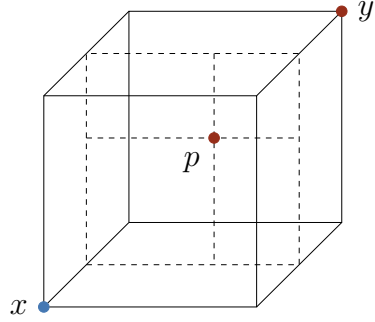
\begin{figure}
\begin{center}
\scalebox{0.7}{\begin{tikzpicture}[scale=0.8]

\tikzset{every node}=[font=\LARGE]

\pgfmathsetmacro{\side}{5}
\pgfmathsetmacro{\offset}{2}

\node[query,darkpastelblue,label=left:$x$] (a) at (0,0) {};

\coordinate (b) at (\side,0);
\coordinate (c) at (0,\side);
\coordinate (d) at (\side,\side);

\coordinate (aa) at ($(a) + (\offset,\offset)$);
\coordinate (bb) at ($(b) + (\offset,\offset)$);
\coordinate (cc) at ($(c) + (\offset,\offset)$);
\node[query,darkpastelred,label=right:$y$] (dd) at ($(d) + (\offset,\offset)$) {};

\coordinate (aaa) at ($(a)!0.5!(aa)$);
\coordinate (bbb) at ($(b)!0.5!(bb)$);
\coordinate (ccc) at ($(c)!0.5!(cc)$);
\coordinate (ddd) at ($(d)!0.5!(dd)$);


\coordinate (l) at ($(ccc)!0.4!(aaa)$);
\coordinate (r) at ($(bbb)!0.6!(ddd)$);
\coordinate (top) at ($(ccc)!0.6!(ddd)$);
\coordinate (bot) at ($(aaa)!0.6!(bbb)$);

\draw[dashed] (l) -- (r);
\draw[dashed] (top) -- (bot);

\tkzInterLL(l,r)(top,bot) \tkzGetPoint{newred}

\node[query,darkpastelred,label=below left:$p$] at (newred) {};

\draw (a) -- (b) -- (d) -- (c) -- (a);
\draw (aa) -- (bb) -- (dd) -- (cc) -- (aa);
\draw[dashed] (aaa) -- (bbb) -- (ddd) -- (ccc) -- (aaa);

\draw (a) -- (aa);
\draw (b) -- (bb);
\draw (c) -- (cc);
\draw (d) -- (dd);

\end{tikzpicture}
}
\end{center}
\caption{One iteration of the outer algorithm. The dashed lines show the
principle slice chosen by the algorithm, and the point $p$ is the point returned
by the inner algorithm. In this case $p \in \Down(f)$, and so the algorithm
focuses on the sub-instance $L_{x, p}$.}
\label{fig:outer}
\end{figure}

Each iteration of the outer algorithm will reduce the number of points in $L_{x,
y}$ by a factor of two.
To do this, the algorithm selects a largest dimension of
that sub-instance, which is a dimension $i$ that maximizes $y_i - x_i$. It
then makes a call to the inner algorithm for the principle slice $s$ defined so that $s_i = \lfloor (y_i - x_i)/2 \rfloor$ and $s_j = \blank$ for all $j \ne i$.
\begin{enumerate}
\item If the inner algorithm returns a violation of order preservation in the
slice, then this is also an order preservation violation in $L$, and so the
algorithm returns this and terminates. 
\item If the inner algorithm returns a point $p$ in the slice such that $p \in \Up(f)$, then
the algorithm sets $x := p$ and moves to the next iteration.
\item If the inner algorithm returns a point $p$ such that $p \in \Down(f)$, then
the algorithm sets $y := p$ and moves to the next iteration.
\end{enumerate}
Figure~\ref{fig:outer} gives an example of this procedure.

The algorithm can continue as long as there exists a dimension $i$ such that
$y_i - x_i \ge 2$, since this ensures that there will exist a principle slice
strictly between $x$ and $y$ in dimension~$i$ that cuts the sub-instance
in half. Note that there can be at most $k \cdot \log n$ iterations of the
algorithm before we arrive at the final sub-instance $L_{x, y}$
with $y_i - x_i < 2$ for all $i$. Lemma~\ref{lem:tarskinew} gives us an
efficient algorithm to find a solution in this final instance, which uses at
most $O(\sum_{i = 1}^k (y_i - x_i)) = O(k)$ queries.
So we have proved the following theorem.


\begin{theorem}
\label{thm:outer}
Suppose that there exists an inner algorithm that makes at most $q$ queries.
Then a solution to the \tarski problem can be found by making $O(q \cdot k \cdot \log n + k)$ queries. 
\end{theorem}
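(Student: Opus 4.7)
The plan is to verify three things: that the algorithm's invariant is preserved across iterations, that the main loop runs for at most $O(k \log n)$ iterations, and that the final clean-up step contributes only $O(k)$ additional queries.

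First, I would establish the invariant $x \preceq y$, $x \in \Up(f)$, $y \in \Down(f)$ by induction on iterations. Initially, $x = (1, \dots, 1)$ and $y = (n_1, \dots, n_k)$, so the invariant is immediate. For the inductive step, the invariant supplies exactly the precondition that the inner algorithm demands of its inputs $a$ and $b$. If the inner algorithm returns a point $p \in L_{x,y} \cap L_s$ with $p \in \Up(f)$, then setting $x := p$ preserves the invariant: $p \in L_{x,y}$ gives $p \preceq y$, and $y \in \Down(f)$ is unchanged. The case $p \in \Down(f)$ is symmetric. If instead the inner algorithm returns an order preservation violation inside $L_{x,y}$, it is already a \solnref{T2} solution for $f$ on the full lattice, since an order preservation violation in a sub-instance is one for $f$, and the outer algorithm halts.

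Next, I would bound the number of main-loop iterations. In each iteration the algorithm chooses a dimension $i$ maximising $y_i - x_i$, and the slice index $s_i = \lfloor (y_i - x_i)/2 \rfloor$ lies strictly between $x_i$ and $y_i$ whenever $y_i - x_i \ge 2$, which is precisely the loop guard. After the iteration, regardless of whether $x$ or $y$ is replaced by the returned point $p \in L_s$, the new gap in dimension $i$ is at most $\lceil (y_i - x_i)/2 \rceil$. Consequently each individual dimension $j$ can be chosen at most $\log n_j \le \log n$ times before its gap drops below $2$, and summing over all $k$ dimensions gives at most $k \log n$ iterations in total. Each iteration uses at most $q$ queries to the inner algorithm, contributing $O(q \cdot k \log n)$ queries overall.

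Finally, once the loop exits we have $y_i - x_i \le 1$ for every $i$, so $\sum_i (y_i - x_i) \le k$, and \autoref{lem:tarskinew} produces a \solnref{T1} or \solnref{T2} solution inside $L_{x,y}$ using $O(k)$ additional queries. Adding the two contributions yields the stated bound of $O(q \cdot k \log n + k)$ queries. The argument is essentially bookkeeping; the only point requiring a moment's care is checking that the chosen slice lies strictly interior to the current sub-instance, which follows from always cutting the largest dimension under the loop guard that some dimension still has gap at least~$2$.
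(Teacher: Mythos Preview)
Your proof is correct and follows essentially the same approach as the paper's argument in Section~\ref{sec:outer}: maintain the invariant $x \in \Up(f)$, $y \in \Down(f)$, $x \preceq y$; halve the largest dimension via the inner algorithm for at most $k \log n$ iterations; then invoke Lemma~\ref{lem:tarskinew} on the residual $O(k)$-point box. Your per-dimension counting of iterations is a minor rephrasing of the paper's volume-halving count, and you are slightly more explicit than the paper in verifying invariant preservation, but there is no substantive difference.
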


\section{The Inner Algorithm}
\label{sec:inner}

We now describe an inner algorithm for three dimensions that makes
$O(\log n)$ queries. Throughout this section we assume that the inner
algorithm has been invoked on a sub-instance $L_{u, d}$ and
principle slice $s$, and without loss of
generality we assume that $s = (\blank, \blank, s_3)$.

\begin{figure}
\begin{center}
\scalebox{0.4}{

\newsavebox\one
\begin{lrbox}{\one}
\begin{tikzpicture}[scale=\myscale]
\tikzset{every node}=[font=\Huge]
\draw (0, 0) rectangle (8, 8);
\mydnarrow{8}{8}
\myleftarrow{8}{8}
\myuparrow{0}{0}
\myrightarrow{0}{0}
\node[query, label=left:$a$] at (0,0) {};
\node[query, label=right:$b$] at (8,8) {};
\end{tikzpicture}
\end{lrbox}

\newsavebox\two
\begin{lrbox}{\two}
\begin{tikzpicture}[scale=\myscale]
\tikzset{every node}=[font=\Huge]
\draw (0, 0) rectangle (8, 8);
\forwardarrow{0}{0}{darkpastelgreen}
\forwardarrow{5}{0}{darkpastelgreen}
\myrightarrow{0}{0}
\myleftarrow{5}{0}
\node[query,label={[yshift=2mm] left:$a$}] (a) at (0,0) {};
\node[query,label=above:$u$] (a) at (5,0) {};
\backarrow{8}{8}{darkpastelgreen}
\backarrow{5}{8}{darkpastelgreen}
\myleftarrow{8}{8}
\myrightarrow{5}{8}
\node[query,label={[yshift=-2mm] right:$b$}] (c) at (8,8) {};
\node[query,label=below:$d$] (d) at (5,8) {};
\end{tikzpicture}
\end{lrbox}

\newsavebox\three
\begin{lrbox}{\three}
\begin{tikzpicture}[scale=\myscale]
\tikzset{every node}=[font=\Huge]
\draw (0, 0) rectangle (8, 8);
\forwardarrow{0}{0}{darkpastelgreen}
\forwardarrow{0}{5}{darkpastelgreen}
\myuparrow{0}{0}
\mydnarrow{0}{5}
\node[query,label={[yshift=2mm] left:$a$}] (a) at (0,0) {};
\node[query,label=right:$u$] (a) at (0,5) {};
\backarrow{8}{8}{darkpastelgreen}
\backarrow{5}{8}{darkpastelgreen}
\myleftarrow{8}{8}
\myrightarrow{5}{8}
\node[query,label={[yshift=-2mm] right:$b$}] (c) at (8,8) {};
\node[query,label=below:$d$] (d) at (5,8) {};
\end{tikzpicture}
\end{lrbox}

\newsavebox\four
\begin{lrbox}{\four}
\begin{tikzpicture}[scale=\myscale]
\tikzset{every node}=[font=\Huge]
\draw (0, 0) rectangle (8, 8);
\myuparrow{0}{0}
\myrightarrow{0}{0}
\node[query, label=left:$a$] at (0,0) {};
\backarrow{8}{8}{darkpastelgreen}
\backarrow{8}{5}{darkpastelgreen}
\mydnarrow{8}{8}
\myuparrow{8}{5}
\node[query,label={[yshift=-2mm] right:$b$}] (c) at (8,8) {};
\node[query,label=left:$d$] (d) at (8,5) {};
\end{tikzpicture}
\end{lrbox}


\begin{tikzpicture}

\subfig{0}{0}{}{\one}
\subfig{9}{0}{}{\two}
\subfig{18}{0}{}{\three}
\subfig{27}{0}{}{\four}

\end{tikzpicture}
}
\end{center}
\caption{Four example sub-instances that satisfy the inner algorithm invariant. 
}
\label{fig:invariant}
\end{figure}
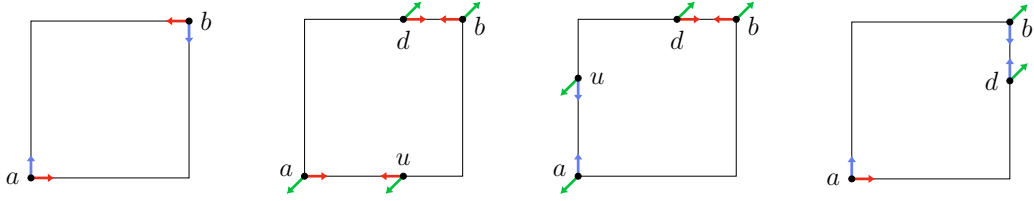

\myparagraph{\bf Down set witnesses.}
Like the outer algorithm, the inner algorithm will also focus on smaller and
smaller sub-instances that are guaranteed to contain a solution by an invariant,
but now the invariant is more complex.
%
To define the invariant, we first introduce the concept of a \emph{down
set witness} and an \emph{up set witness}. The points $d$ and $b$ in the second
example in Figure~\ref{fig:invariant} give an example of a down set witness.
Note that the following properties are satisfied. 
\begin{itemize}
\item $f$ weakly increases at $d$ and $b$ in dimension 3.
\item $d$ and $b$ have the same coordinate in dimension 2.
\item $d$ weakly increases in dimension 1 while $b$ weakly decreases in dimension 1.
\end{itemize}
We also
allow down set witnesses like those given by $d$ and $b$ in the fourth example
of Figure~\ref{fig:invariant} that satisfy the same properties with
dimensions 1 and 2 swapped. Thus, the 
formal definition of a down set witness abstracts over dimensions 1 and 2.

\begin{definition}[Down set witness]
A down set witness is a pair of points $(d, b)$ with $d, b \in L_s$ 
such that both of the following are satisfied. 
\begin{itemize}
\item $d_3 \le f(d)_3$ and $b_3 \le f(b)_3$.
\item $\exists\ i,j \in \{1, 2\}$ with $i \ne
j$ s.t. $d_i = b_i$ and $d_j \le b_j$, while $d_j \le f(d)_j$ and $f(b)_j \le b_j$.
\end{itemize}
\end{definition}

\noindent If $(d, b)$ is a down set witness and $d_2 = b_2$, then we call $(d,
b)$ a \emph{top-boundary} witness, while if $d_1 = b_1$, then we call $(d, b)$ a
\emph{right-boundary} witness.

The following lemma states that if we have a down set witness $(d, b)$, then
between~$d$ and~$b$ we can find
either a solution that can be returned by the inner algorithm (cases 1
and~2 of the lemma), or a point that is in the down set of the slice $s$ (case
3 of the lemma).

The proof the lemma can be found in Appendix~\ref{app:downsetwitness}.
Informally, the proof for a top-boundary witness $(d, b)$
uses the fact that $d$ and $b$ point towards each other in dimension $1$ to
argue that there must be a fixed point $p$ (or an order preservation violation)
of the one-dimensional slice between $d$ and $b$. Then, it is shown that either
this point is in $\Up(f)$, and so is a solution for the inner algorithm, or it
is in $\Down(f_s)$, or that $p$ violates order preservation with $d$ or
$b$. 

\begin{lemma}
\label{lem:downsetwitness}
If $(d, b)$ is a down set witness, then one of the following exists.
\begin{enumerate}
\item A point $c$ satisfying $d \preceq c \preceq b$ such that $c \in \Up(f)$.
\item Two points $x, y$ satisfying $d \preceq x \preceq y \preceq b$ that
witness order preservation violation of~$f$.
\item A point $c$ satisfying $d \preceq c \preceq b$ such that $c \in
\Down(f_s)$.
\end{enumerate}
\end{lemma}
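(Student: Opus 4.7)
The plan is to reduce to a one-dimensional Tarski instance along the direction in which $d$ and $b$ differ, and then to do a short case analysis at the one-dimensional fixed point. By the symmetric role of dimensions~1 and~2 in the definition of a down set witness, it suffices to treat the top-boundary case, where $d_2 = b_2$, $d_1 \le b_1$, $d_1 \le f(d)_1$, and $f(b)_1 \le b_1$; the right-boundary case is identical after swapping dimensions~1 and~2. I will work under the natural assumption that $d_3 = b_3 = s_3$, i.e., that both witness points lie in the slice~$s$, as is the case in the inner algorithm's invariant and as is depicted in Figure~\ref{fig:invariant}.

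First I would consider the one-dimensional segment joining $d$ and $b$, together with the induced one-dimensional function $g(t) = f((t, d_2, s_3))_1$ on $[d_1, b_1]$. The witness conditions $d_1 \le f(d)_1$ and $f(b)_1 \le b_1$ translate to $g(d_1) \ge d_1$ and $g(b_1) \le b_1$, so the standard one-dimensional Tarski argument either produces some $t^{\star} \in [d_1, b_1]$ with $g(t^{\star}) = t^{\star}$, or uncovers two values $t_1 < t_2$ with $g(t_1) > g(t_2)$. In the latter case the pair $((t_1, d_2, s_3), (t_2, d_2, s_3))$ is directly an order-preservation violation of $f$, yielding conclusion~2.

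So suppose I have a point $p = (t^{\star}, d_2, s_3)$ with $f(p)_1 = p_1$. The remaining argument is a case analysis on the behaviour of $f$ at $p$ in the other two dimensions. If $f(p)_3 < p_3$, then $d \preceq p$ and $f(d)_3 \ge d_3 = s_3 = p_3 > f(p)_3$, so the pair $(d, p)$ is an order-preservation violation, giving conclusion~2. If $f(p)_3 \ge p_3$ and $f(p)_2 \ge p_2$, then $f(p)_i \ge p_i$ in every dimension, so $p \in \Up(f)$, giving conclusion~1. Finally, if $f(p)_3 \ge p_3$ but $f(p)_2 < p_2$, then $f_s(p) = (p_1, f(p)_2, s_3) \preceq p$, so $p \in \Down(f_s)$, giving conclusion~3.

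The main obstacle, modest as it is here, is lining up the three mutually exclusive outcomes of the case analysis with the three possible conclusions of the lemma. The key insight is that a downward step of $f$ in dimension~3 at $p$ is incompatible with the witness condition $f(d)_3 \ge d_3$ and therefore forces an order-preservation violation with $d$, whereas a downward step in dimension~2 is exactly what is needed to place $p$ in $\Down(f_s)$; the remaining case automatically places $p$ in $\Up(f)$.
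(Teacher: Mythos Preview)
Your proof is correct and follows essentially the same route as the paper: reduce to the one-dimensional Tarski problem along the segment from $d$ to $b$, obtain either an order-preservation violation or a one-dimensional fixed point $p$, and then do a short case analysis on the remaining two coordinates of $f(p)$. The only cosmetic difference is the order in which you split cases (you branch on dimension~3 first, then dimension~2, whereas the paper branches on dimension~2 first and then on dimension~3); the resulting conclusions are identical.
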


\myparagraph{\bf Up set witnesses.}
An up set witness is simply a down set witness in which all inequalities have
been flipped. The second and third diagrams in Figure~\ref{fig:invariant} show
the two possible configurations of an up set witness $(a, u)$. Note that for up
set witnesses, dimension 3 is now required to weakly decrease.

\begin{definition}[Up set witness]
An up set witness is a pair of points $(a, u)$ with $a, u \in L_s$ such that
both of the following are satisfied. 
\begin{itemize}
\item $a_3 \ge f(a)_3$ and $u_3 \ge f(u)_3$.
\item $\exists\ i,j \in \{1, 2\}$ with $i \ne j$ s.t. $a_i = u_i$ and $u_j \ge a_j$, while
$u_j \ge f(u)_j$ and $f(a)_j \ge a_j$.
\end{itemize}
\end{definition}

\noindent We say that an up set witness $(a, b)$ is a \emph{left-boundary}
witness if $a_1 = b_1$, while we call it a \emph{bottom-boundary} witness if
$a_2 = b_2$.

The following lemma is the analogue of Lemma~\ref{lem:downsetwitness} for up set
witnesses. 
The proof, which can be found in Appendix~\ref{app:upsetwitness}, simply flips all inequalities in the proof of Lemma~\ref{lem:downsetwitness}.

\begin{lemma}
\label{lem:upsetwitness}
If $(a, u)$ is an up set witness, then one of the following exists.
\begin{enumerate}
\item A point $c$ satisfying $a \preceq c \preceq u$ such that $c \in \Down(f)$.
\item Two points $x, y$ satisfying $a \preceq x \preceq y \preceq u$ that
witness order preservation violation of~$f$.
\item A point $c$ satisfying $a \preceq c \preceq u$ such that $c \in \Up(f_s)$.
\end{enumerate}
\end{lemma}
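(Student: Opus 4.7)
The plan is to mirror the proof of Lemma~\ref{lem:downsetwitness} with every monotonicity inequality flipped. By the symmetry of the definition I would first assume without loss of generality that $(a,u)$ is a left-boundary witness, i.e.\ $a_1 = u_1$ and $u_2 \ge a_2$, with $f(a)_2 \ge a_2$ and $f(u)_2 \le u_2$; the bottom-boundary case is handled identically after swapping the roles of dimensions $1$ and $2$. Note also that $a_3 = u_3 = s_3$ since $a, u \in L_s$.

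The first step is to binary search along the one-dimensional segment $\{x \in L : x_1 = a_1,\ x_3 = s_3,\ a_2 \le x_2 \le u_2\}$, using the sign of $f(\cdot)_2 - (\cdot)_2$ as the monotone test. Because $f(a)_2 \ge a_2$ and $f(u)_2 \le u_2$, the search terminates either with a point $p$ on the segment satisfying $f(p)_2 = p_2$, or with two points $x, y$ on the segment with $x_2 + 1 = y_2$ satisfying $f(x)_2 > x_2$ and $f(y)_2 < y_2$. In the latter case $f(x)_2 \ge x_2 + 1 = y_2 > f(y)_2$ while $x \preceq y$, so $(x,y)$ directly witnesses an order preservation violation and we return case~2.

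The second step is a case analysis at the fixed-in-dimension-$2$ point $p$, which by construction lies in the slice $s$ and satisfies $a \preceq p \preceq u$. If $f(p)_1 \ge p_1$, then together with $f(p)_2 = p_2$ this gives $p \in \Up(f_s)$, so case~3 holds. Otherwise $f(p)_1 < p_1$, and I split on dimension $3$: if $f(p)_3 \le p_3$ then all three coordinates of $f(p)$ are at most those of $p$, so $p \in \Down(f)$ and case~1 holds; while if $f(p)_3 > p_3$, then $f(p)_3 > p_3 = u_3 \ge f(u)_3$ combined with $p \preceq u$ yields an order preservation violation witnessed by $(p,u)$, giving case~2.

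The only real subtlety will be keeping track of which endpoint to pair $p$ with in the final violation sub-case; the asymmetry between $a$ and $u$ in dimension $3$ (both weakly decrease, but it is $u$ whose value in dimension $3$ upper-bounds $p_3$) forces the choice of $u$. Once this bookkeeping is settled, the remainder of the argument is a mechanical flipping of signs from the proof of Lemma~\ref{lem:downsetwitness}, and I do not anticipate any genuine obstacle beyond verifying that the above case split is exhaustive.
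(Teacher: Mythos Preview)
Your proposal is correct and follows essentially the same approach as the paper: locate a one-dimensional fixed point $p$ between the witness endpoints (you do it by explicit binary search, the paper invokes Lemma~\ref{lem:tarskinew}) and then perform the identical case split on the remaining dimension and on dimension~$3$. Your remark that in the $f(p)_3 > p_3$ sub-case the violation must be taken with $u$ rather than $a$ is exactly right---in fact the paper's appendix proof pairs $p$ with $a$ here, which is a slip, since $f(p)_3 > f(a)_3$ does not contradict $f(a)\preceq f(p)$; your pairing $p\preceq u$ with $f(p)_3 > f(u)_3$ is the correct one.
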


\myparagraph{\bf The invariant.}
At each step of the inner algorithm, we will have a sub-instance $L_{a, b}$ 
that satisfies the following invariant. 

\begin{definition}[Inner algorithm invariant]
The instance $L_{a, b}$ satisfies the invariant if 
\begin{itemize}
\item Either $a \in \Up(f_s)$ or there is a known up set witness $(a, u)$ with
$u \preceq b$.
\item Either $b \in \Down(f_s)$ or there is a known down set witness $(d, b)$
with $a \preceq d$.
\end{itemize}
If we have both an up set witness and a down set witness then we also
require that $u \preceq d$.
\end{definition}

Figure~\ref{fig:invariant} gives four example instances that satisfy the
invariant. Note that there are actually nine possible configurations, since the
first point of the invariant can be satisfied either by a point in the up set, a
left-boundary up set witness, or a bottom-boundary up set witness, and the
second point of the invariant likewise has three possible configurations.

The following lemma shows that, if the invariant is satisfied, then the
sub-instance $L_{a, b}$ contains a solution that can be returned by the inner
algorithm. The proof invokes Lemmas~\ref{lem:downsetwitness}
and~\ref{lem:upsetwitness} to either immediately find a solution for the inner
algorithm, or find two points $x \preceq y$ in the sub instance where $x$ is in
the up set and $y$ is in the down set. The latter case allows us to invoke
Lemma~\ref{lem:tarskinew} to argue that the sub-instance contains a fixed point
$p$ of the slice~$s$. If $p$ weakly increases in the third dimension, then $p
\in \Up(f)$, while if $p$ decreases in the third dimension then $p \in
\Down(f)$.
The full proof can be found in Appendix~\ref{app:invariant}.

\begin{lemma}
\label{lem:invariant}
If $L_{a, b}$ satisfies the invariant then one of the following exists.
\begin{itemize}
\item A point $x \in L_{a, b}$ such that $x \in \Up(f)$ or $x \in \Down(f)$. 
\item Two points $x, y \in L_{a, b}$ that witness a violation of the order preservation of $f$.
\end{itemize}
\end{lemma}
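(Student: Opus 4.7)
The plan is to reduce to a two-dimensional \tarski instance on the slice~$s$. Using Lemmas~\ref{lem:downsetwitness} and~\ref{lem:upsetwitness} to ``cash in'' the witnesses (or using $a$ and $b$ directly when no witness is recorded), I will extract two points $x, y \in L_{a,b} \cap L_s$ satisfying $x \preceq y$, $x \in \Up(f_s)$, and $y \in \Down(f_s)$. A single invocation of Lemma~\ref{lem:tarskinew} applied to $f_s$ on $L_{x,y} \cap L_s$ then produces either a violation of order preservation (which is automatically a violation for $f$) or a fixed point $p$ of $f_s$; in the latter case a single query of~$f(p)_3$ decides whether $p \in \Up(f)$ or $p \in \Down(f)$.

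To extract $x$, I case on the first clause of the invariant. If $a \in \Up(f_s)$, I set $x := a$. Otherwise an up set witness $(a, u)$ with $u \preceq b$ is available, and invoking Lemma~\ref{lem:upsetwitness} on $(a, u)$ either returns a point in $L_{a,u} \cap \Down(f)$ or an order-preservation violation in $L_{a,u}$ (so the lemma being proven is already done), or else a point $x \in L_{a,u} \cap L_s$ with $x \in \Up(f_s)$, which I carry forward. The symmetric case analysis on the second clause of the invariant, using Lemma~\ref{lem:downsetwitness} on the down set witness~$(d, b)$ when present, yields a point $y \in L_{a, b} \cap L_s$ with $y \in \Down(f_s)$.

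It remains to verify $x \preceq y$. If both witnesses are present, the invariant guarantees $u \preceq d$, so $x \preceq u \preceq d \preceq y$. If only the up set witness exists then $y = b$ and $x \preceq u \preceq b = y$; symmetrically when only the down set witness exists. If neither is present then $x = a \preceq b = y$. With $x \preceq y$, $x, y \in L_s$, $x \in \Up(f_s)$, and $y \in \Down(f_s)$ established, Lemma~\ref{lem:tarskinew} applied to $f_s$ on the two-dimensional sub-lattice $L_{x,y} \cap L_s$ returns either a pair of points witnessing a violation of $f_s$ (which is also a violation of $f$) or a fixed point $p$ of $f_s$ with $x \preceq p \preceq y$. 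In the latter case $f(p)_1 = p_1$ and $f(p)_2 = p_2$, so querying $f(p)_3$ resolves the remaining coordinate: if $p_3 \le f(p)_3$ then $p \preceq f(p)$ and $p \in \Up(f)$, otherwise $f(p) \preceq p$ and $p \in \Down(f)$. The main obstacle is the case analysis guaranteeing $x \preceq y$ across all nine configurations of the invariant, which is precisely where the auxiliary clause $u \preceq d$ (when both witnesses coexist) is indispensable.
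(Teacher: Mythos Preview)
Your proof is correct and follows essentially the same route as the paper's: use Lemmas~\ref{lem:upsetwitness} and~\ref{lem:downsetwitness} to either terminate early or produce $x \in \Up(f_s)$ and $y \in \Down(f_s)$ with $x \preceq y$, then apply Lemma~\ref{lem:tarskinew} to the slice to get a fixed point $p$ of $f_s$ and resolve membership in $\Up(f)$ or $\Down(f)$ via $f(p)_3$. Your case analysis for $x \preceq y$ is slightly more explicit than the paper's, but the argument is the same.
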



\myparagraph{\bf A special case.}
There is a special case that we will encounter in the inner algorithm that
requires more effort to deal with. One example of this case is shown in
Case 3.a.ii of Figure~\ref{fig:step2}. Here we have a point $p$ on the right-hand boundary of
the instance that satisfies $p_1 < f(p)_1$, meaning that $f$ moves $p$ outside
of the instance. If $b \in \Down(f)$, or if there is a top-boundary down set
witness, then it is straightforward to show that $p$ and $b$ violate order
preservation. 

However, if we have a right-boundary down set witness $(d, b)$ with $p \preceq
d$ then we need to do further work\footnote{The
case where $p \succ d$ will never occur in our algorithm, so we can ignore it.}. 
Note that the properties of a down set witness ensure that $d_2 \le f(d)_2$ and
$d_3 \le f(d)_3$. But there are two possibilities for dimension 1. If $d_1 \le
f(d)_1$ then $d \in \Up(f)$ and it can be returned by the inner algorithm. On
the other hand, if $d_1 > f(d)_1$, then we can show that $p$ and $d$ violate
order preservation.
We prove this formally in the following lemma.





\begin{lemma}
\label{lem:logn}
Let $L_{a, b}$ be a sub-instance that satisfies the invariant, and let $p$ be a
point satisfying $a \preceq p \preceq b$ that also satisfies one of the following conditions.
\begin{enumerate}
\item $p_1 = b_1$ and $p_1 < f(p)_1$. 
\item $p_2 = b_2$ and $p_2 < f(p)_2$. 
\item $p_1 = a_1$ and $p_1 > f(p)_1$. 
\item $p_2 = a_2$ and $p_2 > f(p)_2$. 
\end{enumerate}
Suppose further that, if there exists a down-set witness $(d, b)$ then $p
\preceq d$, and if there exists an up-set witness $(a, u)$ then $u \preceq p$. 
Then there exists a solution for the inner algorithm that can be found using
constantly many queries.
\end{lemma}

\myparagraph{\bf Initialization. }
The input to the algorithm is a sub-instance $L_{x, y}$, and recall that we have
fixed the principle slice $s = (\blank, \blank, s_3)$.  The initial values for
$a$ and $b$ are determined as follows. For each dimension $i$ we have
$a_i = s_3$ if $i = 3$, and $a_i = x_i$ otherwise, and we have $b_i = s_3$ if $i = 3$, and $b_i = y_i$ otherwise. 
That is, $a$ and $b$ are the projections of $x$ and $y$ onto $s$.

The following lemma, whose proof can be found in Appendix~\ref{app:init}, states
that either $a$ and~$b$ satisfy the invariant, or that we can
easily find a violation of order preservation. 

\begin{lemma}
\label{lem:init}
Either $L_{a, b}$ satisfies the invariant, or there is violation of order
preservation between $a$ and $x$, or between $b$ and $y$.
\end{lemma}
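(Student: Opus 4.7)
The plan is to analyze the coordinates of $a$ and $b$ directly, using only the outer algorithm's guarantee that $x \in \Up(f)$ and $y \in \Down(f)$ together with the definitions. Recall that $a = (x_1, x_2, s_3)$ and $b = (y_1, y_2, s_3)$, and that $s_3$ lies between $x_3$ and $y_3$ since $s$ is a principle slice of $L_{x,y}$. Consequently $x \preceq a$ and $b \preceq y$, both points lie in $L_s \cap L_{x,y}$, and $a \preceq b$ because $x_1 \le y_1$, $x_2 \le y_2$ and $a_3 = b_3 = s_3$.

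I aim to prove the stronger statement that whenever there is no order preservation violation between $x$ and $a$, we in fact have $a \in \Up(f_s)$, and symmetrically $b \in \Down(f_s)$ whenever there is no violation between $b$ and $y$. For the first claim, assume $f(x) \preceq f(a)$. Combining this with $x \preceq f(x)$ from $x \in \Up(f)$ yields $a_1 = x_1 \le f(x)_1 \le f(a)_1$ and $a_2 = x_2 \le f(x)_2 \le f(a)_2$. By the definition of $f_s$ on the slice $s = (\blank,\blank,s_3)$, we have $(f_s(a))_i = f(a)_i$ for $i \in \{1,2\}$ and $(f_s(a))_3 = s_3 = a_3$, so $a \preceq f_s(a)$ coordinate by coordinate, i.e., $a \in \Up(f_s)$. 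The argument for $b \in \Down(f_s)$ is obtained by flipping all inequalities: from $f(b) \preceq f(y)$ and $f(y) \preceq y$ we conclude $f(b)_i \le y_i = b_i$ for $i \in \{1,2\}$, and $(f_s(b))_3 = s_3 = b_3$, giving $f_s(b) \preceq b$.

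Putting these together, if neither violation occurs then $a \in \Up(f_s)$ and $b \in \Down(f_s)$, so both clauses of the invariant are satisfied with no up-set or down-set witness required, and the final clause about $u \preceq d$ is vacuous. This gives the desired dichotomy.

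The argument is essentially mechanical coordinate bookkeeping; the only thing to be careful about is that $f_s$ identifies with $f$ exactly on the free coordinates, so that the inequalities transferred from $x$ and $y$ via order preservation really do certify $a \preceq f_s(a)$ and $f_s(b) \preceq b$. I do not anticipate any real obstacle beyond writing out these projections cleanly.
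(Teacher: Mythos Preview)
Your proposal is correct and follows essentially the same approach as the paper: both argue, via coordinate-wise comparison using $x \in \Up(f)$ and $y \in \Down(f)$, that either $a \in \Up(f_s)$ or $x$ and $a$ violate order preservation (and symmetrically for $b$ and $y$), which is exactly the invariant with no witnesses needed. The only cosmetic difference is that you phrase it as the contrapositive (assume $f(x) \preceq f(a)$ and derive $a \in \Up(f_s)$), whereas the paper assumes $f(a)_i < a_i$ for some $i \in \{1,2\}$ and derives the violation directly.
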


\myparagraph{\bf The algorithm.}
Now suppose that we have an instance $L_{a, b}$
that satisfies the invariant. We will describe how to execute one iteration of
the algorithm, which will either find a violation of order
preservation, or find a new instance whose size is at most half the
size of the $L_{a, b}$.

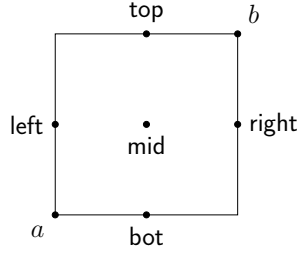
\begin{figure}
\begin{center}
\scalebox{0.4}{\begin{tikzpicture}[scale=0.6]

\tikzset{every node}=[font=\Huge]

\draw (0, 0) rectangle (10, 10);

\node[query,label=below left:{$a$}] at (0,0) {};
\node[query,label=above right:{$b$}] at (10,10) {};
	
\node[query,label=below:\midp] at (5,5) {};
\node[query,label=above:\topp] at (5,10) {};
\node[query,label=below:\botp] at (5,0) {};
\node[query,label=left:\leftp] at (0,5) {};
\node[query,label=right:\rightp] at (10,5) {};

\end{tikzpicture}
}
\end{center}
\caption{The five points used by the inner algorithm.
}
\label{fig:queries}
\end{figure}

We begin by defining some important points. 
We define $\midp = \lfloor (a + b)/2 \rfloor$ to be the \emph{midpoint}
of the instance, and we define 
the following points, 
which are shown in Figure~\ref{fig:queries}:
\begin{align*}
\botp   &= (\lfloor (a_1 + b_1)/2 \rfloor, a_2), &
\leftp  &= (a_1, \lfloor (a_2 + b_2)/2 \rfloor), \\
\topp   &= (\lfloor (a_1 + b_1)/2 \rfloor, b_2), &
\rightp &= (b_1, \lfloor (a_2 + b_2)/2 \rfloor). 
\end{align*}

\myparagraph{\bf Step 1: Fixing the up and down set witnesses.}
Suppose that $L_{a, b}$ satisfies the invariant with a top-boundary down set
witness $(d, b)$, We would like to ensure that $\topp \preceq d$, since
otherwise if we cut the instance in half in a later step,
the witness may no longer be within the sub-instance.
For the same reason, we would like to ensure 
that $(d, b)$ satisfies $\rightp \preceq d$ for a right-boundary down set witness, 
that $(a, u)$ satisfies $u \preceq \leftp$ for a left-boundary up set
witness, and
that $(a, u)$ satisfies $u \preceq \botp$ for a bottom-boundary up set witness.
By the end of Step 1 we will have either found a violation of order
preservation, moved into the next iteration with a sub-instance of half the
size, or all inequalities above will hold.

Step 1 consists of the following procedure. The procedure should be read
alongside Figure~\ref{fig:step1}, which gives a diagram for every case presented
below.

\begin{enumerate}

\item If $(d, b)$ is a top-boundary down set witness and $\topp \preceq d$ then there is no need to do anything. 
On the other hand, if 
$d \prec \topp$ we use the following procedure.

\begin{enumerate}
\item We first check if $\topp_3 > f(\topp)_3$. If this is the case, then since
the invariant ensures that $d_3 \le f(d)_3$ we have
$f(d)_3 \ge d_3 = \topp_3 > f(\topp)_3$
so $d \preceq \topp$ but $f(d) \not\preceq f(\topp)$, and an order preservation
violation has been found and the inner algorithm terminates.

\item We next check the whether $\topp_1 > f(\topp)_1$. In this case, we can
use $(d, \topp)$ as a down set witness for the sub-instance $L_{a,
\topp}$, where we observe that $\topp$ satisfies the requirements since 
$\topp_3 \le f(\topp)_3$ and $\topp_1 > f(\topp)_1$. Hence, $L_{a, \topp}$
satisfies the invariant (if $L_{a, b}$ also has an up set witness $(a, u)$ then
note that $u \preceq d$ continues to hold), and so the algorithm moves into the
next iteration with the sub-instance $L_{a, \topp}$.

\item In this final case we have $\topp_3 \le f(\topp)_3$ and $\topp_1 \le
f(\topp)_1$. Therefore $(\topp, b)$ is a valid down set witness for $L_{a, b}$
(if $L_{a, b}$ also has an up set witness $(a, u)$ then note that $u \preceq
d \prec \topp$). So we can replace $(d, b)$ with $(\topp, b)$ and continue,
noting that our down set witness now satisfies the required inequality.
\end{enumerate}

\item If $(d, b)$ is a right-boundary down set witness and $\rightp \preceq d$
then there is no need to do anything. On the other hand, if $d \prec \rightp$
then we use the same procedure as case 1, where dimensions 1 and 2 are exchanged
and the point \topp is replaced by the point \rightp.

\item 
If $(a, u)$ is a bottom-boundary up set witness and $u \preceq \botp$ then there
is no need to do anything. On the other hand, if $\botp \prec u$ then we use the
following procedure, which is the same as the procedure from case 1, where all
inequalities have been flipped.

\begin{enumerate}
\item We first check if $\botp_3 < f(\botp)_3$. If this is the case, then since
the invariant ensures that $u_3 \ge f(u)_3$ we have
$f(u)_3 \le u_3 = \botp_3 < f(\botp)_3$
so $u \succeq \botp$ but $f(u) \not\succeq f(\botp)$, and an order preservation
violation has been found and the inner algorithm terminates.

\item We next check the whether $\botp_1 < f(\botp)_1$. In this case, we can
use $(\botp, u)$ as an up set witness for the sub-instance $L_{\botp, b}$, where we observe that $\botp$ satisfies the requirements since 
$\botp_3 \ge f(\botp)_3$ and $\botp_1 < f(\botp)_1$. Hence, $L_{\botp, b}$
satisfies the invariant (if $L_{a, b}$ also has a down set witness $(d, b)$ then
note that $u \preceq d$ continues to hold), and so the algorithm moves into the
next iteration with the sub-instance $L_{\botp, b}$.

\item In this final case we have $\botp_3 \ge f(\botp)_3$ and $\botp_1 \ge
f(\botp)_1$. Therefore $(a, \botp)$ is a valid up set witness for $L_{a, b}$
(if $L_{a, b}$ also has a down set witness $(d, b)$ then we note that $\botp
\preceq u \preceq d$). So we can replace $(a, u)$ with $(a, \botp)$, noting that
our
up set witness now satisfies the required inequality.
\end{enumerate}

\item If $(a, u)$ is a left-boundary up set witness and $u \preceq \leftp$ 
then there is no need to do anything. On the other hand, if $u \succ \leftp$
then we use the same procedure as case 3, where dimensions 1 and 2 are exchanged
and the point \leftp is replaced by the point \botp.

\end{enumerate}

\begin{figure}
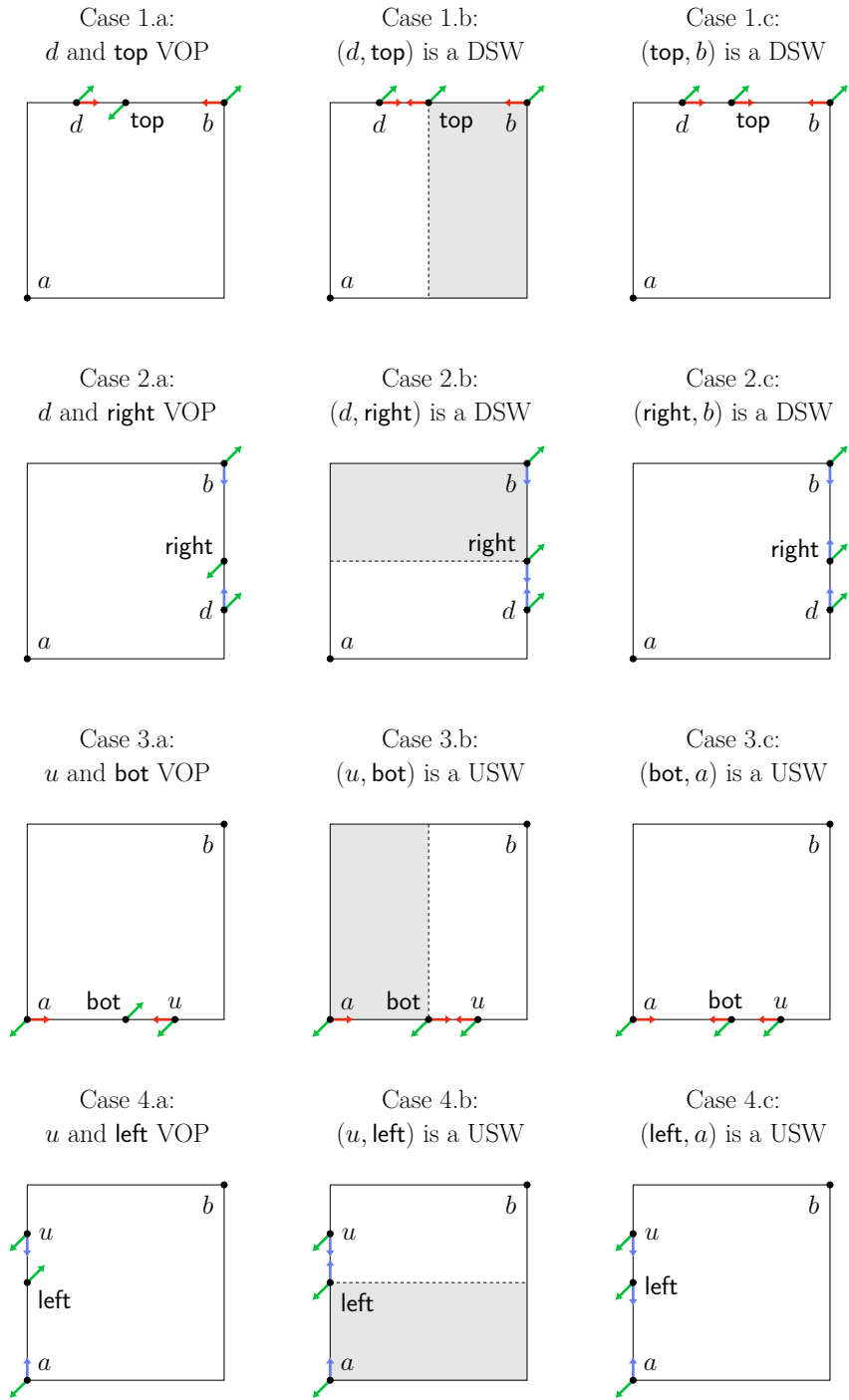

\centering
\scalebox{0.4}{ 
\input figures/step1.tex
}
\caption{All cases used in Step 1 of the algorithm. In the
labels, VOP is short for ``violate order preservation'', DSW is short for
``down set witness'', and USW is short for ``up set witness''.}
\label{fig:step1}
\end{figure}

\myparagraph{\bf Step 2: Find a smaller sub-instance.}
If Step 1 of the algorithm did not already move us into the next iteration of
the algorithm with a smaller instance, we apply Step 2. This step performs
a case analysis on the point \midp. 
The following procedure should be read in conjunction with 
Figure~\ref{fig:step2}, which provides a diagram for every case.

\begin{enumerate}
\item Check if $\midp_1 \le f(\midp)_1$ and $\midp_2 \le f(\midp)_2$. If this is
the case then $\midp \in \Up(f_s)$, and so we can move to the next iteration of
the algorithm with the sub-instance $L_{\midp, b}$. Note that if $L_{a,
b}$ has a down-set witness $(d, b)$, then Step 1 of the algorithm has ensured
that $\midp \preceq d$, and so $(d, b)$ is also a valid down-set witness for
$L_{\midp, b}$.

\item Check if $\midp_1 \ge f(\midp)_1$ and $\midp_2 \ge f(\midp)_2$. If this is
the case then $\midp \in \Down(f_s)$, and so we can move to the next iteration of
the algorithm with the sub-instance $L_{a, \midp}$. Note that if $L_{a,
b}$ has an up-set witness $(a, u)$, then Step 1 of the algorithm has ensured
that $u \preceq \midp$, and so $(a, u)$ is also a valid down-set witness for
$L_{a, \midp}$.

\item Check if $\midp_1 \le f(\midp)_1$ and $\midp_2 > f(\midp)_2$. If so, we use the following procedure.

\begin{enumerate}
\item Check if $\midp_3 \le f(\midp)_3$. If so, do the following.

\begin{enumerate}
\item Check if $\rightp_3 > f(\rightp)_3$. 
If this holds then we have 
$f(\midp)_3 \ge \midp_3 = \rightp_3 > f(\rightp)_3$,
meaning that $\midp \preceq \rightp$ but $f(\midp) \not\preceq f(\rightp)$. Thus
we have found a violation of order preservation and the algorithm terminates.

\item Check if $\rightp_1 < f(\rightp)_1$. 
If this holds then we use
Lemma~\ref{lem:logn} (with $p := \rightp$) to find a solution that can be
returned by the inner algorithm. 



\item If we reach this case then we have 
$\midp_3 \le f(\midp)_3$ and
$\rightp_3 \le f(\rightp)_3$, while we also have
$\midp_1 \le f(\midp)_1$ and $\rightp_1 \ge f(\rightp)_1$. Thus $(\midp, \rightp)$
is a valid down set witness for the instance $L_{a, \rightp}$. Note that if
$L_{a, b}$ has an up set witness $(a, u)$, then Step 1 ensures
that $u \preceq \midp$, and so $L_{a, \rightp}$ satisfies the invariant. 
So the algorithm moves to the next iteration with sub-instance 
$L_{a, \rightp}$.

\end{enumerate}

\item In this case we have $\midp_3 > f(\midp)_3$. The following three steps are
symmetric to those used in Case 3.a, but with all inequalities flipped,
dimension $1$ substituted for dimension $2$, the
point $\botp$ substituted for $\rightp$, and the point $a$ substituted for $b$.

\begin{enumerate}
\item Check if $\botp_3 > f(\botp)_3$. If this holds then we have 
$f(\midp)_3 \le \midp_3 = \botp_3 < f(\botp)_3$,
meaning that $\midp \succeq \botp$ but $f(\midp) \not\succeq f(\botp)$. Thus
we have found a violation of order preservation and the algorithm terminates.

\item Check if $\botp_2 > f(\botp)_2$. If this holds then we can use
Lemma~\ref{lem:logn} (with $p := \botp$) to find a solution that can be returned
by the inner algorithm.


\item If we reach this case then we have 
$\midp_3 \ge f(\midp)_3$ and
$\botp_3 \ge f(\botp)_3$, while we also have
$\midp_2 \ge f(\midp)_2$ and $\botp_2 \le f(\botp)_2$. Thus $(\botp, \midp)$
is a valid up set witness for the instance $L_{\botp, b}$. Note that if
$L_{a, b}$ has a down set witness $(d, b)$, then Step 1 of the algorithm ensures
that $d \succeq \midp$, and so $L_{\botp, b}$ satisfies the invariant. 
The algorithm will therefore move to the next iteration with the sub-instance 
$L_{\botp, b}$.

\end{enumerate}

\end{enumerate}

\item In this final case we have $\midp_1 > f(\midp)_1$ and $\midp_2 \le
f(\midp)_2$. Here we follow the same procedure as Case 3, but with dimensions 1
and 2 exchanged, every instance of the point $\rightp$ replaced with $\topp$,
and every instance of $\botp$ replaced with $\leftp$.

\end{enumerate}

\begin{figure}
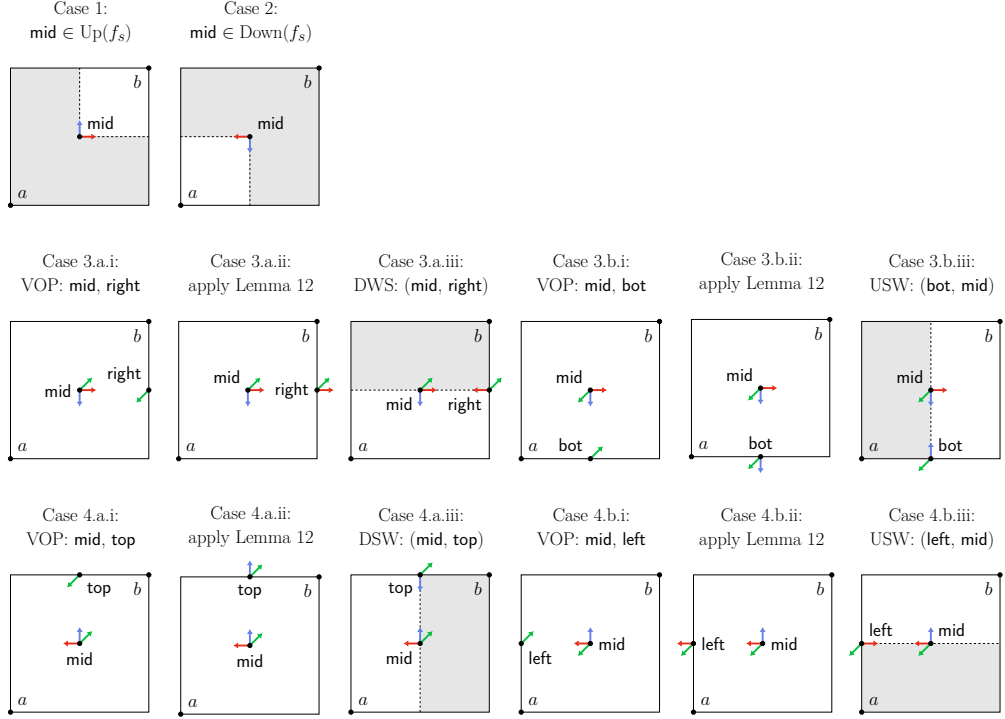

\centering
\scalebox{0.28}{ 
\input figures/step2.tex
}
\caption{All cases used in Step 2 of the algorithm. In the
labels, VOP is short for ``violate order preservation'', DSW is short for
``down set witness'', and USW is short for ``up set witness''.}
\label{fig:step2}
\end{figure}

\myparagraph{\bf The terminal phase of the algorithm.}
The algorithm can continue so long as $b_1 \ge a_1 + 2$ and $b_2 \ge a_2 + 2$,
since this ensures that all cases will cut the width of one of the
dimensions in half. 
The algorithm terminates once we have both $b_1 \le a_1 + 1$ \emph{and} $b_2 \le a_2 + 1$.
However, once there exists only one dimension $i$ for which $b_i \le a_i +
1$, we must be careful, since now the midpoint lies on the boundary of the
instance, and some of the cases of the algorithm may not rule out anything. We
deal with this scenario separately.

There are two distinct cases that we must deal with. The first case is a
\emph{width-one instance}, in which $b_i = a_i + 1$ for some index $i \in \{1,
2\}$, and $b_j > a_j + 1$ for the index $j \in \{1, 2\}$ with $j\ne i$, meaning
that the width of the shortest dimension is exactly one. These instances
are problematic because the midpoint \midp will now lie on the boundary of the
instance, and due to this, it is possible that the algorithm may be unable to
proceed.

We must also deal with \emph{width-zero instances}, in which $b_i = a_i$ for
some index $i \in \{1, 2\}$, and $b_j > a_j + 1$ for the index $j \in \{1, 2\}$
with $j\ne i$. These are one-dimensional subinstances, and once again it is
possible for the algorithm to be unable to proceed. 

We will use special procedures for width-one and width-zero instances, which we
outline below. 

\myparagraph{\bf Width-one instances.}
In the presentation below we will assume that the index $i = 1$, meaning that 
$b_1 = a_1 + 1$ (and hence the left-right width of the instance is one). The
case for $i = 2$ is symmetric. 

When the algorithm is presented with a width-one instance, it first performs
some preprocessing to ensure that there is no bottom-boundary up set witnesses, or
top-boundary down set witness. The preprocessing considers the following two
cases, which are shown in Figure~\ref{fig:width1_preprocess}.
\begin{enumerate}
\item If the instance has a bottom-boundary up set witness $(a, u)$, then note
that $a$ and $u$ are directly adjacent in dimension 1, and so 
Lemma~\ref{lem:upsetwitness} implies that we can either return $a$ or $u$ as a
solution for the inner algorithm, or that $a \in \Up(f_s)$, or $u \in \Up(f_s)$. 
\begin{enumerate}
\item If $a \in \Up(f_s)$, then the instance $L_{a, b}$ continues to satisfy the
invariant if we delete the up set witness.
\item If $u \in \Up(f_s)$, then the width-zero instance $L_{u, b}$ satisfies the
invariant, where we note that if $L_{a, b}$ has a down set witness $(d, b)$,
then since $u \preceq d$, we have that $(d, b)$ is also a valid down set witness
for $L_{u, b}$. 
\end{enumerate}

\item If the instance has a top-boundary down set witness $(d, b)$, then note
that $d$ and $b$ are directly adjacent in dimension 1, and so 
Lemma~\ref{lem:downsetwitness} implies that we can either return $d$ or $b$ as a
solution for the inner algorithm, or that $d \in \Down(f_s)$, or $b \in \Down(f_s)$. 
\begin{enumerate}
\item If $b \in \Down(f_s)$, then the instance $L_{a, b}$ continues to satisfy the
invariant if we delete the down set witness.
\item If $d \in \Down(f_s)$, then the width-zero instance $L_{a, d}$ satisfies the
invariant, where we note that if $L_{a, b}$ has an up set witness $(a, u)$,
then since $u \preceq d$, we have that $(a, u)$ is also a valid up set witness
for $L_{a, d}$. 
\end{enumerate}

\end{enumerate}

\begin{figure}
\centering
\scalebox{0.4}{ 


\begin{lrbox}{\rowonecolone}
\begin{tikzpicture}[scale=\myscale]
\tikzset{every node}=[font=\Huge]
\backarrow{0}{8}{white}
\backarrow{3}{8}{white}
\draw (0, 0) rectangle (3, 8);
\forwardarrow{0}{0}{darkpastelgreen}
\forwardarrow{3}{0}{darkpastelgreen}
\myrightarrow{0}{0}
\myleftarrow{3}{0}
\node[query,label={[yshift=2mm] left:$a$}] at (0,0) {};
\node[query,label={[yshift=2mm] right:$u$}] at (3,0) {};
\end{tikzpicture}
\end{lrbox}

\begin{lrbox}{\rowonecoltwo}
\begin{tikzpicture}[scale=\myscale]
\tikzset{every node}=[font=\Huge]
\forwardarrow{0}{0}{white}
\forwardarrow{3}{0}{white}
\draw (0, 0) rectangle (3, 8);
\backarrow{0}{8}{darkpastelgreen}
\backarrow{3}{8}{darkpastelgreen}
\myrightarrow{0}{8}
\myleftarrow{3}{8}
\node[query,label={[yshift=-2mm] left:$d$}] at (0,8) {};
\node[query,label={[yshift=-2mm] right:$b$}] at (3,8) {};
\end{tikzpicture}
\end{lrbox}

\begin{tikzpicture}

\subfigprime{0}{0}{Case 1:}{bottom-boundary USW}{\rowonecolone}
\subfigprime{10}{0}{Case 2:}{top-boundary DSW}{\rowonecoltwo}

\end{tikzpicture}
}
\caption{The two cases that trigger the preprocessing step for width-one
instances.}
\label{fig:width1_preprocess}
\end{figure}
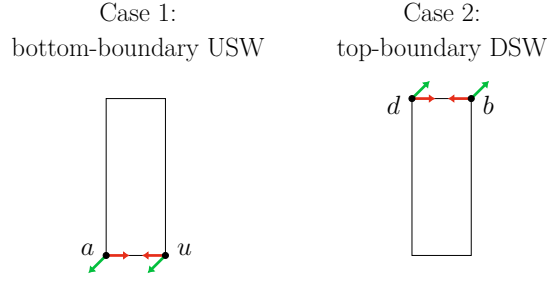

With the preprocessing completed, the algorithm uses two separate runs of
Steps~1 and~2, which each use a different midpoint.
In the first
run we use $\midpone = \lfloor (a + b)/2 \rfloor$ as normal, while in the second
run we use $\midptwo = \lceil (a + b)/2 \rceil$ as the midpoint, and we also
change the definitions of $\botp$, $\topp$, $\leftp$, and $\rightp$ to round up
instead of down. If either of the two runs decrease the size of the instance,
then we move to the next iteration on the smaller instance, where the reasoning
given in Steps 1 and 2 ensures that the instance continues to satisfy the
invariant. 

However, it could be the case that both runs do not decrease the size of the
instance. Due to the preprocessing, if Step 1 attempts to recurse on a smaller
sub-instance then it must succeed, since the only problematic cases are Case 1.b
and Case 3.b, which both depend on the existence of a top or bottom-boundary
witness, and the preprocessing ensures that these cannot exist.

On the other hand, Step 2 can fail to make progress in both runs. 
For the case where $i = 1$,
this can only occur if Case 3.b.iii of Step 2 triggered for the run with
$\midpone$ and Case 4.a.iii triggered for the run with $\midptwo$. But we can
argue that in this case a solution to the inner algorithm is easy to find.

Note that Case 3.b.iii can only trigger for $\midpone$ when $\midpone_1 \le
f(\midpone)_1$, while Case 4.a.iii can only trigger for $\midptwo$ when
$f(\midptwo)_1 < \midptwo$. Both of the following cases are shown in
Figure~\ref{fig:width1}.
\begin{enumerate}
\item If $\midpone_1 < f(\midpone)_1$ then we have
\begin{equation*}
f(\midptwo)_1 \le \midptwo_1 - 1 = \midpone_1 < f(\midpone)_1,
\end{equation*}
so we have $\midpone \preceq \midptwo$ but $f(\midpone) \not\preceq
f(\midptwo)$, meaning that $\midpone$ and $\midptwo$ witness a violation of
order preservation.

\item If $\midpone_1 = f(\midpone)_1$, then note that Case 3.b.iii ensures that
 $\midpone_2 \ge f(\midpone)_2$ and $\midpone_3 \ge f(\midpone)_3$. Therefore
$\midpone$ is in $\Down(f)$, so can be returned by the inner algorithm.

\end{enumerate}
So in both cases a solution to the inner algorithm has been found.


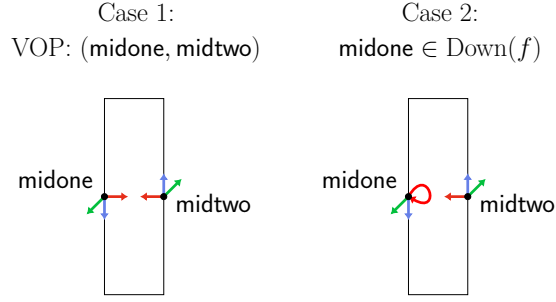
\begin{figure}
\centering
\scalebox{0.4}{ 

\tikzset{every loop/.style={min distance=20mm,in=-30,out=60}}

\begin{lrbox}{\rowonecolone}
\begin{tikzpicture}[scale=\myscale]
\tikzset{every node}=[font=\Huge]
\backarrow{0}{10}{white}
\backarrow{3}{10}{white}
\forwardarrow{0}{0}{white}
\forwardarrow{3}{0}{white}
\draw (0, 0) rectangle (3, 10);
\backarrow{3}{5}{darkpastelgreen}
\forwardarrow{0}{5}{darkpastelgreen}
\myrightarrow{0}{5}
\myleftarrow{3}{5}
\mydnarrow{0}{5}
\myuparrow{3}{5}
\node[query,label={[yshift=6mm] left:$\midpone$}] at (0,5) {};
\node[query,label={[yshift=-2mm] right:$\midptwo$}] at (3,5) {};
\end{tikzpicture}
\end{lrbox}

\begin{lrbox}{\rowonecoltwo}
\begin{tikzpicture}[scale=\myscale]
\tikzset{every node}=[font=\Huge]
\backarrow{0}{10}{white}
\backarrow{3}{10}{white}
\forwardarrow{0}{0}{white}
\forwardarrow{3}{0}{white}
\draw (0, 0) rectangle (3, 10);
\backarrow{3}{5}{darkpastelgreen}
\forwardarrow{0}{5}{darkpastelgreen}
\path[->, red, line width=\lw] (0,5) edge [loop above] (0,5);
\myleftarrow{3}{5}
\mydnarrow{0}{5}
\myuparrow{3}{5}
\node[query,label={[yshift=6mm] left:$\midpone$}] at (0,5) {};
\node[query,label={[yshift=-2mm] right:$\midptwo$}] at (3,5) {};
\end{tikzpicture}\end{lrbox}


\begin{tikzpicture}

\subfigprime{0}{0}{Case 1:}{VOP: $(\midpone, \midptwo)$}{\rowonecolone}
\subfigprime{10}{0}{Case 2:}{$\midpone \in \Down(f)$}{\rowonecoltwo}

\end{tikzpicture}
}
\caption{The two cases that are considered for width-one instances, when both
runs of the algorithm fail to make progress. In the left instance, $f(\midpone)$
strictly increases in dimension 1, while in the right instance $f(\midpone)$
does not move in dimension one, which we indicate with the self loop.}
\label{fig:width1}
\end{figure}

\myparagraph{\bf Width-zero instances.}
We again describe the procedure for the case where $i = 1$, meaning that the
instance has width zero in the left-right dimension. The case where $i = 2$ is
symmetric. 

The algorithm begins by performing a preprocessing step that removes any 
top-boundary down set witnesses or bottom-boundary up set witnesses. 
If there is a bottom-boundary up set witness $(a, u)$ then note that $a = u$,
and therefore Lemma~\ref{lem:upsetwitness} implies that either $a$ is a solution
that can be returned by the inner algorithm, or that $a \in \Up(f_s)$. Likewise,
if there is a top-boundary down set witness $(d, b)$, then $d = b$, and
Lemma~\ref{lem:downsetwitness} implies that either $d$ can be returned by the
inner algorithm, or $d \in \Down(f_s)$. 
Thus, the preprocessing step can either find a solution for the inner algorithm,
or produce an instance that satisfies the invariant that has no top-boundary
down set witness and no bottom-boundary up set witness.

Once the preprocessing has taken place, the algorithm proceeds through Step 1
and Step 2 as normal. If those steps make progress, then we continue on the
smaller width-zero instance. If they do not make progress, then we will show
that the inner algorithm can terminate after making at most $O(\log n)$ further
queries. 

We first observe that the only cases of Step 1 that would fail to make progress
are Case 1.b and Case 3.b, but neither of those cases can trigger because the
preprocessing step ensures that there is no bottom-boundary up set witness or
top-boundary down set witness.

On the other hand, Case 3.b.iii and Case 4.a.iii of Step 2 can fail to make
progress. We show how, in each of these cases, a solution for the inner
algorithm can be found by making at most $O(\log n)$ extra queries.
All of the following cases are depicted in Figure~\ref{fig:width0}.

\begin{enumerate}
\item If Case 3.b.iii is triggered, then note that $\midp_1 \le f(\midp)_1$. There are two cases to consider.
\begin{enumerate}
\item If $\midp_1 = f(\midp)_1$, then since $\midp_2 \ge f(\midp)_2$ and
$\midp_3 \ge f(\midp_3)$, we have that $\midp \in \Down(f)$, meaning that
$\midp$ can be returned by the inner algorithm.

\item If $\midp_1 < f(\midp)_1$ then we argue that an order preservation
violation can be found using at most $O(\log n)$ queries.
\begin{enumerate}
\item If $b \in \Down(f_s)$, then we have
\begin{equation*}
f(b)_1 \le b_1 = \midp_1 < f(\midp)_1,
\end{equation*}
meaning that $\midp \preceq b$, but $f(\midp) \not\preceq f(b)$, and so $\midp$
and $b$ violate order preservation.

\item If instead there is a down set witness $(d, b)$, then note that due to the
preprocessing, it must be a right-boundary down set witness, and due to Step 1,
we must have $\midp \preceq d$.  By Lemma~\ref{lem:downsetwitness} there exists
a point $x$ satisfying $d \preceq x \preceq b$ that can either be returned by
the inner algorithm, or that satisfies $x \in \Down(f_s)$. Furthermore, using
we can find this point in $O(\log n)$ queries using binary search. Thus we
can spend $O(\log n)$ queries and either immediatly terminate, or in the case where we find a point $x \in \Down(f_s)$, we can
repeat the argument above to show that $\midp$ and $x$ violate order
preservation.
\end{enumerate}
\end{enumerate}

\item If Case 4.a.iii is triggered, then note that $\midp_1 > f(\midp)_1$, and
we
argue that an order preservation
violation can be found using at most $O(\log n)$ queries.
\begin{enumerate}
\item If $a \in \Up(f_s)$, then we have
\begin{equation*}
f(\midp)_1 < \midp_1 = a_1 \le f(\midp)_1,
\end{equation*}
meaning that $a \preceq \midp$, but $f(a) \not\preceq f(\midp)$, and so $a$
and $\midp$ violate order preservation.

\item If instead there is an up set witness $(a, u)$, then note that due to the
preprocessing, it must be a left-boundary up set witness, and due to Step 1,
we must have $u \preceq \midp$.  By Lemma~\ref{lem:upsetwitness} there exists
a point $x$ satisfying $a \preceq x \preceq u$ that can either be returned by
the inner algorithm, or that satisfies $x \in \Up(f_s)$. Furthermore, we can
find this point in $O(\log n)$ queries using binary search. Thus we
can spend $O(\log n)$ queries and either immediatly terminate, or in the case
where we find a point $x \in \Up(f_s)$, we can repeat the argument above to show
that $\midp$ and $x$ violate order preservation.
\end{enumerate}
\end{enumerate}

\begin{figure}
\centering
\scalebox{0.4}{ 

\pgfmathsetmacro{\d}{2.5}
\pgfmathsetmacro{\dd}{10-\d}

\tikzset{every loop/.style={min distance=20mm,in=-30,out=60}}

\begin{lrbox}{\rowonecolone}
\begin{tikzpicture}[scale=\myscale]
\tikzset{every node}=[font=\Huge]
\backarrow{0}{12}{white}
\forwardarrow{0}{0}{white}
\draw (0, 0) -- (0, 12);
\path[->, red, line width=\lw] (0,6) edge [loop above] (0,6);
\mydnarrow{0}{6}
\forwardarrow{0}{6}{darkpastelgreen}
\node[query,label={[yshift=2mm] left:$\midp$}] at (0,6) {};
\end{tikzpicture}
\end{lrbox}

\begin{lrbox}{\rowonecoltwo}
\begin{tikzpicture}[scale=\myscale]
\tikzset{every node}=[font=\Huge]
\backarrow{0}{12}{white}
\forwardarrow{0}{0}{white}
\draw (0, 0) -- (0, 12);
\myleftarrow{0}{12}
\mydnarrow{0}{12}
\mydnarrow{0}{6}
\myrightarrow{0}{6}
\backarrow{0}{6}{darkpastelgreen}
\node[query,label=right:$b$] at (0,12) {};
\node[query,label={left:$\midp$}] at (0,6) {};
\end{tikzpicture}
\end{lrbox}

\begin{lrbox}{\rowonecolthr}
\begin{tikzpicture}[scale=\myscale]
\tikzset{every node}=[font=\Huge]
\forwardarrow{0}{0}{white}
\draw (0, 0) -- (0, 12);
\backarrow{0}{12}{darkpastelgreen}
\mydnarrow{0}{12}
\myuparrow{0}{9}
\backarrow{0}{9}{darkpastelgreen}
\mydnarrow{0}{6}
\myrightarrow{0}{6}
\backarrow{0}{6}{darkpastelgreen}
\node[query,label=left:$\midp$] at (0,6) {};
\node[query,label=left:$d$] at (0,9) {};
\node[query,label=left:$x$] at (0,10.5) {};
\node[query,label=left:$b$] at (0,12) {};
\end{tikzpicture}
\end{lrbox}

\newsavebox\rowonecolfou
\begin{lrbox}{\rowonecolfou}
\begin{tikzpicture}[scale=\myscale]
\tikzset{every node}=[font=\Huge]
\backarrow{0}{12}{white}
\forwardarrow{0}{0}{white}
\draw (0, 0) -- (0, 12);
\myrightarrow{0}{0}
\myuparrow{0}{0}
\myuparrow{0}{6}
\myleftarrow{0}{6}
\forwardarrow{0}{6}{darkpastelgreen}
\node[query,label=right:$\midp$] at (0,6) {};
\node[query,label=left:$a$] at (0,0) {};
\end{tikzpicture}
\end{lrbox}

\newsavebox\rowonecolfiv
\begin{lrbox}{\rowonecolfiv}
\begin{tikzpicture}[scale=\myscale]
\tikzset{every node}=[font=\Huge]
\backarrow{0}{12}{white}
\draw (0, 0) -- (0, 12);
\myuparrow{0}{6}
\myleftarrow{0}{6}
\forwardarrow{0}{6}{darkpastelgreen}
\forwardarrow{0}{0}{darkpastelgreen}
\myuparrow{0}{0}
\mydnarrow{0}{3}
\forwardarrow{0}{3}{darkpastelgreen}
\node[query,label=right:$a$] at (0,0) {};
\node[query,label=right:$u$] at (0,3) {};
\node[query,label=right:$\midp$] at (0,6) {};
\node[query,label=right:$x$] at (0,1.5) {};
\end{tikzpicture}
\end{lrbox}


\begin{tikzpicture}

\subfigprime{0}{0}{Case 1.a:}{$\midp \in \Down(f)$}{\rowonecolone}
\subfigprime{6}{0}{Case 1.b.i:}{VOP: $(\midp, b)$}{\rowonecoltwo}
\subfigprime{12}{0}{Case 1.b.ii:}{see caption}{\rowonecolthr}
\subfigprime{18}{0}{Case 2.a:}{VOP: $(a, \midp)$}{\rowonecolfou}
\subfigprime{24}{0}{Case 2.b:}{see caption}{\rowonecolfiv}

\end{tikzpicture}
}
\caption{The five cases that can be encountered if the algorithm fails to make
progress for a width-zero instance. In Cases 1.b.ii and 2.b we spend $O(\log n)$
queries to find the point $x$, which then allows us to terminate.}
\label{fig:width0}
\end{figure}
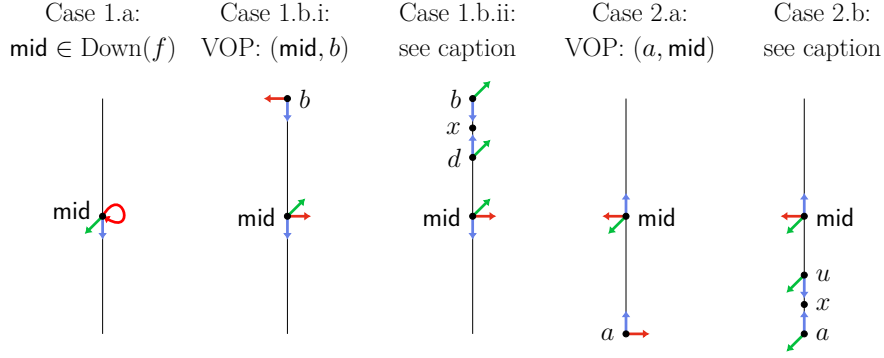

\myparagraph{\bf Termination.}
If the algorithm does not hit any of the cases that return a solution
immediately, then it will continue until it finds an instance $L_{a, b}$ with
$b_1 \le a_1 + 1$ and $b_2 \le a_2 + 1$ that satisfies the invariant.
Lemma~\ref{lem:invariant} implies that any sub-instance that satisfies the
invariant contains a solution that can be returned by the inner algorithm. Since
then $L_{a, b}$ contains at most four points, we can check all of them and then
return the solution that must exist.

\myparagraph{\bf Query complexity.}
Observe that each iteration of the algorithm either finds a violation of order
preservation, finds a solution after spending $O(\log n)$
further queries, or reduces the size of one of the dimensions by a factor of two.
Moreover, each non-terminating iteration of the algorithm queries at most five
points. 
Hence, if the algorithm is run on a sub-instance $L_{a, b}$ with $n_1 = b_1 -
a_1$ and $n_2 = b_2 - a_2$, then the algorithm will terminate after making at
most $O(\log n_1 +
\log n_2 + \log n)$ queries. So the overall query complexity of the algorithm is
$O(\log n)$, and we have shown the following theorem.

\begin{theorem}
\label{thm:inner}
There is an $O(\log n)$-query inner algorithm for 3-dimensional \tarski.
\end{theorem}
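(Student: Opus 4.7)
The plan is to verify correctness and query complexity of the algorithm described above, assembling the lemmas established in this section. First, I would argue that after the initialization step, either we have already produced an order preservation violation (handled by Lemma~\ref{lem:init}) or we enter the main loop with a sub-instance $L_{a,b}$ that satisfies the inner algorithm invariant. This is the base case for an inductive argument on the size $\max(b_1 - a_1, b_2 - a_2)$.

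Second, I would establish an invariant-preservation claim for a single iteration: every one of the four cases in Step~1 and four cases in Step~2 either (i) terminates the algorithm with a legitimate output (an order preservation violation detected directly from the five queried points, or a point in $\Up(f) \cup \Down(f)$ extracted via Lemma~\ref{lem:logn}), or (ii) produces a new sub-instance $L_{a', b'}$ that still satisfies the invariant and has $\max(b'_1 - a'_1, b'_2 - a'_2) \le \max(b_1 - a_1, b_2 - a_2)/2 + O(1)$. This reduces to a case-by-case check: in Step~1, each branch is verified against the definition of an up/down set witness and the compatibility condition $u \preceq d$; in Step~2, each branch either plants $\midp$ as a new corner of the sub-instance, constructs a new boundary witness using $\topp$, $\botp$, $\leftp$, or $\rightp$ together with $\midp$, or detects a direct order preservation violation from the already-queried points. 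The witness-compatibility requirement $u \preceq d$ in the invariant is preserved because Step~1 has already ensured that the existing witnesses lie on the far side of the midpoint.

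Third, I would argue the termination condition. Once $b_1 - a_1 \le 1$ and $b_2 - a_2 \le 1$, the sub-instance $L_{a,b}$ contains at most four points, and by Lemma~\ref{lem:invariant} one of them (or a pair of them) constitutes a solution for the inner algorithm; this can be found in $O(1)$ queries by brute force. Since each non-terminating iteration halves the width in some dimension while preserving the invariant, the main loop runs for at most $O(\log n_1 + \log n_2) = O(\log n)$ iterations.

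For the query bound, I would observe that each iteration performs only the five queries of Figure~\ref{fig:queries} together with a constant number of auxiliary queries in Step~1, so the main loop uses $O(\log n)$ queries in total. The only possibly expensive event is a single invocation of Lemma~\ref{lem:logn}, but that costs just $O(\log n)$ and immediately terminates the algorithm. Summing gives $O(\log n)$ queries overall, proving Theorem~\ref{thm:inner}. The main obstacle is ensuring that the witness-maintenance conditions in Step~1 are sufficient to make every Step~2 branch produce a valid invariant for the smaller sub-instance; this is where the careful positioning of $\topp, \botp, \leftp, \rightp$ relative to $\midp$ and the existing witnesses has to be matched against the definitions of boundary witnesses, and is what drives the structure of the case analysis.
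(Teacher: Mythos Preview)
Your proposal follows essentially the same route as the paper: initialise via Lemma~\ref{lem:init}, maintain the invariant through Steps~1 and~2, terminate either via Lemma~\ref{lem:logn} or by exhausting the sub-instance and invoking Lemma~\ref{lem:invariant}, and count five queries per iteration.

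There is one technical slip worth fixing. You phrase the progress measure as $\max(b'_1 - a'_1,\, b'_2 - a'_2) \le \max(b_1 - a_1,\, b_2 - a_2)/2 + O(1)$, but this is false in general: for instance, in Step~2 Case~3.a.iii the new sub-instance is $L_{a,\rightp}$, which halves dimension~2 but leaves dimension~1 untouched, so the maximum need not shrink at all. The correct statement (which the paper uses and which you in fact rely on in your third paragraph) is that each non-terminating iteration halves \emph{one} of the two dimensions; the right potential is therefore $\log(b_1-a_1) + \log(b_2-a_2)$, not the maximum side length. Since you already draw the correct conclusion $O(\log n_1 + \log n_2)$ iterations, this is a statement-level inconsistency rather than a gap in understanding, but the inductive hypothesis as written would not go through.
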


Theorems~\ref{thm:outer} and~\ref{thm:inner} imply that 3-dimensional \tarski
can be solved using $O(\log^2 n)$ queries, and this can be combined with the $\Omega(\log^2
n)$ lower bound for two-dimensional \tarski~\cite{EPRY20}, to give the following
theorem, where the straightforward lower bound is proved in Appendix~\ref{app:tight3d}.

\begin{theorem}
\label{thm:tight3d}
The deterministic query complexity of three-dimensional \tarski is $\Theta(\log^2 n)$.
\end{theorem}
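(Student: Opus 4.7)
The upper bound $O(\log^2 n)$ is immediate from combining Theorems~\ref{thm:outer} and~\ref{thm:inner} in the case $k = 3$, so the only content is the matching lower bound $\Omega(\log^2 n)$. My plan is a direct black-box reduction from two-dimensional \tarski, for which Etessami et al.~\cite{EPRY20} already established an $\Omega(\log^2 n)$ query lower bound. Any algorithm solving three-dimensional instances will, after the reduction, be able to solve two-dimensional instances at no extra query cost, so the two-dimensional bound transfers.

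Given a hard two-dimensional instance $f : \lat(n, n) \rightarrow \lat(n, n)$, I would construct the padded three-dimensional instance $F : \lat(n, n, n) \rightarrow \lat(n, n, n)$ defined by
\[
F(x_1, x_2, x_3) = \bigl(f(x_1, x_2)_1,\; f(x_1, x_2)_2,\; x_3\bigr),
\]
so that $F$ acts as $f$ on the first two coordinates and as the identity in the third. Each query to $F$ can be answered with a single query to $f$, and so any $q$-query algorithm for three-dimensional \tarski on $\lat(n, n, n)$ becomes a $q$-query algorithm for two-dimensional \tarski on $\lat(n, n)$, provided that \tarski solutions for $F$ translate back to \tarski solutions for $f$.

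The verification has three pieces. First, $F$ is order preserving iff $f$ is, since for $x \preceq y$ in $\lat(n, n, n)$ we have $(x_1, x_2) \preceq (y_1, y_2)$ and the third coordinate is preserved identically. Second, any fixed point $(p_1, p_2, x_3)$ of $F$ projects onto a fixed point $(p_1, p_2)$ of $f$, so a \solnref{T1} solution for $F$ yields a \solnref{T1} solution for $f$. Third, if $x \preceq y$ is a \solnref{T2} violation for $F$, then the failure must occur in coordinate $1$ or $2$, because $F(x)_3 = x_3 \le y_3 = F(y)_3$ holds trivially; projecting onto the first two coordinates gives a \solnref{T2} violation of $f$. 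There is no real obstacle here beyond this bookkeeping, and composing the reduction with the EPRY20 lower bound gives the desired $\Omega(\log^2 n)$ lower bound, matching the $O(\log^2 n)$ upper bound and completing the proof.
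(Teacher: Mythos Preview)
Your proposal is correct and essentially identical to the paper's own argument: the paper also pads a hard lower-dimensional instance with an identity in the extra coordinate, verifies that fixed points and order-preservation violations transfer, and then invokes the $\Omega(\log^2 n)$ lower bound of Etessami et al.\ for dimension two. The only cosmetic difference is that the paper states the reduction for general $k-1 \to k$ before specialising to $k=3$.
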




\section{Higher Dimensions}

In this section we will show that 
$k$-dimensional \tarski can be solved using $O(\log^{2 \lceil k/3 \rceil} n)$
queries. We will prove this by showing that 
if $a$-dimensional \tarski can be solved in $q_a$ queries, and $b$-dimensional
\tarski can be solved in $q_b$ queries, then $(a\cdot b)$-dimensional \tarski
can be solved in $q_a \cdot (q_b + 2)$ queries. Using this fact along with our
new algorithm for three dimensional \tarski yields the result.

So, let $L = \lat(n_1, n_2, \dots, n_k)$ be a $k$-dimensional lattice. Observe that
for any pair of natural numbers $a, b \in \nats$ such that $a + b = k$ we have
that $L = A \times B$ where $A = \lat(n_1, n_2, \dots, n_a)$ and $B =
\lat(n_{a+1}, n_{a+2}, \dots, a_b)$. Given a point $x \in A$ and a point $y \in
B$, we will use $(x, y)$ to denote the point in $L$ that is obtained by
concatenating $x$ and $y$.

Suppose that we have an algorithm $\mathcal{A}$ that solves \tarski on $A$ using
$q_a$ queries, and an algorithm $\mathcal{B}$ that solves \tarski on $B$ using
$q_b$ queries. Our goal is to construct an algorithm $\mathcal{L}$ that solves
\tarski on $L$ using $q_a \cdot (q_b + 2)$ queries. 

\myparagraph{\bf The algorithm.}
At a high level, algorithm $\mathcal{L}$ will execute algorithm $\mathcal{A}$ on
the lattice $A$. We will use $f : L \rightarrow L$ to denote the \tarski instance on
$\mathcal{L}$, and $f_A : A \rightarrow A$ to denote the \tarski instance on
$\mathcal{A}$.

Every time $\mathcal{A}$ queries a point $x \in A$, algorithm $\mathcal{L}$ will fix
the following slice $s_x$ of $L$
\begin{equation*}
(s_x)_i = \begin{cases}
x_i & \text{if $i \le a$,} \\
\blank & \text{otherwise,}
\end{cases}
\end{equation*}
which is the slice defined by fixing the dimensions specified by $x$ and letting all other
dimensions be free. The slice $s_x$ defines a $b$-dimensional lattice $B$, and
so we can apply $\mathcal{B}$ to solve the \tarski problem on the slice $s_x$.
If $\mathcal{B}$ returns a violation of order preservation, then this is also a
violation of order preservation in $L$ and so $\mathcal{L}$ can terminate.
Otherwise, it will return a fixed point $y \in B$ of the slice $s_x$. Given $y$,
algorithm $\mathcal{L}$ can then respond to the query of $\mathcal{A}$ by
setting $f_A(x)_i = f((x, y))_i$ for each dimension $i \le a$. 

With this approach it is clear that if $\mathcal{A}$ returns a fixed point $x
\in A$, and if $y$ is the corresponding fixed point of $s_x$, then $(x, y)$
is a fixed point of $L$. 
However, if $\mathcal{A}$ returns a violation of order preservation, then the
approach so far is not sufficient to produce an order-preservation violation in
$L$. This is because, if $x_1 \preceq x_2$ violate order preservation
in $A$, then the pair of points $(x_1, y_1), (x_2, y_2) \in L$, where each $y_i$ is
the fixed point of $s_{x_i}$, do not necessarily violate order preservation,
because we have no guarantee that $(x_1, y_1) \preceq (x_2, y_2)$.

To address this, whenever we ask algorithm $\mathcal{B}$ to find a fixed point of the
slice $s_x$, we restrict it to a sub-instance of the slice to ensure that the
fixed point $y$ can later be used to witness order-preservation violations if
required. More concretely, the algorithm $\mathcal{L}$ does the following.
\begin{itemize}
\item Algorithm $\mathcal{L}$ maintains a set $P \subseteq A \times B$ of past
points, which is initially set so that $P := \emptyset$. Whenever algorithm $\mathcal{A}$
queries $x \in A$ and algorithm $\mathcal{B}$ responds with $y \in B$, the pair $(x, y)$
is added to $P$. 

\item Whenever $\mathcal{A}$ queries $x \in A$, algorithm $\mathcal{L}$ does the
following.
\begin{itemize}
\item It finds the set $D = \{ y' \; | \; \text{There exists an $x'$ such that }
(x', y') \in P \text{ and } x' \preceq x\}$, which contains the $y$ values
associated with all previously queried points that are below $x$ in the
$\preceq$ ordering. It then constructs the point $l \in B$ that is the least
upper bound of the set $D$.

\item Symmetrically, 
it finds the set $U = \{ y' \; | \; \text{There exists an $x'$ such that }
(x', y') \in P \text{ and } x \preceq x'\}$, which contains the $y$ values
associated with all previously queried points that are above $x$ in the
$\preceq$ ordering. It then constructs the point $u \in B$ that is the greatest
lower bound of the set $U$.
\end{itemize}

\item Then, if $B$ is the $b$-dimensional lattice defined by the slice $s_x$, 
algorithm $\mathcal{L}$ calls algorithm $\mathcal{B}$ on $B_{l, u}$, which is
the sub-instance of $B$ that contains all points between $l$ and $u$ in the
$\preceq$ ordering.
\end{itemize}
Other than this change, the algorithm proceeds as described earlier. 

\myparagraph{\bf Correctness.}
In order for the algorithm to work, 
we must show that the
sub-instance $B_{l, u}$ is non-empty, and contains a solution for the \tarski
problem. We verify both of these properties in the following pair of lemmas.
The first lemma shows that $B_{l, u}$ is not empty.

\begin{lemma}
In every iteration of the algorithm we have $l \preceq u$, which implies that 
$B_{l, u}$ is non-empty.
\end{lemma}
\begin{proof}
We begin by observing that, by construction, the algorithm ensures that for
every pair of points $(x_1, y_1), (x_2, y_2) \in P$, we have that $x_1 \preceq
x_2$ implies $y_1 \preceq y_2$. Hence, for every $y \in D$ and $y'
\in U$ we know that $y \preceq y'$. 

We can use this to prove that $l \preceq u$. Suppose, for the sake of
contradiction, that this is not the case, and therefore that there is some index
$i$ for which $l_i > u_i$. Let $y \in D$ be a point such that $y_i = l_i$,
which must exist since otherwise $l$ would not be the least upper bound of $D$.
Likewise, let $y' \in U$ be a point such that $y'_i = u_i$, which must exist
since otherwise $u$ would not be the greatest lower bound of $U$. We have
\begin{equation*}
y_i = l_i > u_i = y'_i,
\end{equation*}
and therefore we have $y \in D$, $y' \in U$, but $y \not\preceq y'$, giving us
our contradiction.
\end{proof}

The second lemma shows that $B_{l, u}$ contains a solution to the \tarski
problem. 

\begin{lemma}
\label{lem:luop}
Either a violation of order preservation in $L$ can be found between $l$ or $u$
with some previously queried point or the sub-instance $B_{l, u}$ contains a
solution to the \tarski problem.
\end{lemma}
\begin{proof}
We begin by showing that either $l$ is in the up set of $B$, or that a
violation of order preservation can be found. Suppose that there is some index
$i$ satisfying $a < i \le n$ such that $f(l)_i < l_i$. Then let $y \in D$ be a
point such that $y_i = l_i$, which must exist since otherwise $l$ would not be the
least upper bound of $D$. Moreover, since $l$ is the least upper bound of $D$,
we know that $y \preceq l$. 
Since $y$ was found in a previous iteration by
$\mathcal{B}$, we know that it is a fixed point in dimensions $a+1$ through $n$
and therefore $f(y)_i = y_i$. So we have
\begin{equation*}
f(y)_i = y_i = l_i > f(l)_i,
\end{equation*}
meaning that we have $y \preceq l$ but $f(y) \not\preceq f(l)$, and therefore
$y$ and $l$ violate order preservation.
So we have proved that either $y_i \le f(y)_i$ for all $i$ in $a < i \le n$, or
there exists an order-preservation violation between $l$ and some point $y$ in
$D$.

Symmetrically we can show that either $u$ is in the down set of $B$, or that a
violation of order preservation can be found. Suppose that there is some index
$i$ satisfying $a < i \le n$ such that $f(u)_i > u_i$. Then let $y \in U$ be a
point such that $y_i = u_i$, which must exist since otherwise $u$ would not be the
greatest lower bound of $U$. Moreover, since $u$ is the greatest lower bound of $U$,
we know that $u \preceq y$. 
Since $y$ was found in a previous iteration by
$\mathcal{B}$, we know that it is a fixed point in dimensions $a+1$ through $n$
and therefore $f(y)_i = y_i$. So we have
\begin{equation*}
f(u)_i > u_i = y_i = f(y)_i,
\end{equation*}
meaning that we have $u \preceq y$ but $f(u) \not\preceq f(y)$, and therefore
$u$ and $y$ violate order preservation.
So we have proved that either $u_i \ge f(y)_i$ for all $i$ in $a < i \le n$, or
there exists an order-preservation violation between $u$ and some point $y$ in
$U$.

Assuming that we have not found an order-preservation violation so far, we have
shown that $l$ is in the up set of $B$, and $u$ is in the down set of $B$.
Therefore we can apply Lemma~\ref{lem:tarskinew} to show that there must exist a
solution to the \tarski problem in $B_{l,u}$. 
\end{proof}

To complete the correctness argument we just need to observe that if algorithm
$\mathcal{A}$ returns a solution to the \tarski problem defined over $A$, then we
can find a solution to the \tarski problem defined over $L$.
\begin{itemize}
\item If $\mathcal{A}$ returns a point $x \in $ that is a fixed point of $A$, then the
point $(x, y)$, where $y$ was the fixed point returned by algorithm
$\mathcal{B}$
as the fixed point of $s_y$, is a fixed point of $L$.
\item If $\mathcal{A}$ returns two points $x_1, x_2 \in A$ that violate order
preservation in $A$, meaning that $x_1 \preceq x_2$ but $f_A(x_1) \not\preceq
f_A(x_2)$, then let $y_1$ and $y_2$ be the points found by algorithm
$\mathcal{B}$ for $x_1$ and $x_2$ respectively. Due to the procedure above, we
know that $x_1 \preceq x_2$ implies that $(x_1, y_1) \preceq (x_2,
y_2)$. Moreover, since there is some index $i$ in the range $1 \le i \le a$ such
that $f(x_1)_i > f(x_2)_i$, we have that $f(x_1, y_1)_i > f(x_2, y_2)_i$. Therefore,
the points $(x_1, y_1)$ and $(x_2, y_2)$ violate order preservation in $L$. 
\end{itemize}

\myparagraph{\bf Query Complexity.}
Recall that algorithm $\mathcal{A}$ solves \tarski using $q_a$ queries, and that
algorithm $\mathcal{B}$ solves \tarski using $q_b$ queries. In algorithm
$\mathcal{L}$, each time $\mathcal{A}$ makes a query we make queries to $l$ and
$u$, to verify that they do not violate order preservation in the sense of
Lemma~\ref{lem:luop}, and then make one call to $\mathcal{B}$. So in total
$\mathcal{L}$ makes $q_a \cdot (q_b + 2)$ queries, and we have shown the
following theorem.

\begin{theorem}
\label{thm:product}
If $a$-dimensional \tarski can be solved in $q_a$ queries, and $b$-dimensional
\tarski can be solved in $q_b$ queries, then $(a\cdot b)$-dimensional \tarski
can be solved in $q_a \cdot (q_b + 2)$ queries. 
\end{theorem}

We should remark that, although we have proved Theorem~\ref{thm:product} for
grid lattices, since those are the lattices that we study in this paper, a more
general version of the theorem for arbitrary lattices can also be shown to be true using
essentially the same techniques: if any lattice $L = A \times
B$, and there exist algorithms for \tarski for $A$ and $B$ that use $q_a$ and
$q_b$ queries, respectively, then \tarski can be solved on $L$ using $O(q_a
\cdot q_b)$ queries.

Theorem~\ref{thm:product} allows us to apply our $O(\log^2 n)$ query algorithm
for three-dimensional \tarski to obtain the following algorithm for $k$
dimensional \tarski.

\begin{theorem}
$k$-dimensional \tarski can be solved using $O(\log^{2 \lceil k/3 \rceil} n)$
queries. 
\end{theorem}
\begin{proof}
We can decompose the $k$-dimensional lattice $L$ into a product of 
lattices 
\begin{equation*}
L = L_1 \times L_2 \times \dots \times L_{\lceil k/3 \rceil},
\end{equation*}
where each
lattice $L_i$ has dimension at most three. Since we have 
an $O(\log^2 n)$ query algorithm for three-dimensional \tarski, we can
repeatedly apply Theorem~\ref{thm:product} to solve $k$-dimensional \tarski in 
$O(\log^{2 \lceil k/3 \rceil} n)$ queries.
\end{proof}

\myparagraph{\bf Time complexity.}
To obtain time complexity
results, note that writing down a point in the lattice $L$ already requires $k
\cdot \log n$ time. We assume that $f$ is implemented by a Boolean circuit of
size that is polynomial in $k$ and $\log n$. With this assumption, we observe
that all of our algorithms run in polynomial time with respect to $k$ and $\log
n$, and so our time
complexity result can be stated as follows.

\begin{theorem}
\label{thm:time}
If $f$ is presented as a Boolean circuit of size $\poly(\log n, k)$,
then 
there is an algorithm for \tarski that runs in time
$O(\poly(\log n, k) \cdot \log^{2 \lceil k/3 \rceil} n)$.
\end{theorem}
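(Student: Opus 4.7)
The plan is to argue that each of the ingredients behind Theorem~\ref{thm:main} is not merely query-efficient but also time-efficient, and then to combine these bounds in the obvious way. Concretely, I would show (i) each query costs $\poly(\log n, k)$ time to evaluate, (ii) the bookkeeping between queries is polynomial per iteration, and (iii) the total number of queries is $O(\log^{k-1} n)$, so the claimed bound follows by multiplication.

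First I would observe that since $f$ is given as a Boolean circuit of size $\poly(\log n, k)$, each evaluation $f(x)$ can be performed in $\poly(\log n, k)$ time, and writing down/comparing two lattice points $x,y \in L$ costs $O(k \log n)$ time. Next, I would inspect each of the three algorithmic pieces in the paper and verify that the non-query work per iteration is also $\poly(\log n, k)$: the outer algorithm of Section~\ref{sec:outer} only computes midpoints, compares coordinates, and tracks the two points $x \preceq y$ together with their (at most one) witnesses; the three-dimensional inner algorithm of Section~\ref{sec:inner} does the same kind of coordinate-wise case analysis on a constant number of points per iteration, plus the $O(\log n)$-query binary search of Lemma~\ref{lem:logn} (which clearly runs in $\poly(\log n)$ time on top of its queries); and the recursion of Lemma~\ref{lem:recurse} only does $O(\log n)$ binary-search-style bookkeeping per level. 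In all cases, the additional per-query overhead is $\poly(\log n, k)$.

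To conclude, I would combine these bounds. By Theorem~\ref{thm:main}, the total number of queries is $O(\log^{k-1} n)$ for $k \ge 3$. Each such query contributes $\poly(\log n, k)$ time for the circuit evaluation and at most the same amount of time for the surrounding bookkeeping (point arithmetic, case selection, witness maintenance, and the binary searches inside Lemma~\ref{lem:logn}, each of which only multiplies the per-query cost by $\poly(\log n, k)$). Multiplying gives a total running time of $O(\poly(\log n, k) \cdot \log^{k-1} n)$, as required.

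I do not expect any real obstacle here: the statement is essentially a remark that query complexity translates to time complexity under the stated circuit assumption, and the only mild point to be careful about is making sure that the $O(\log n)$ factor in Lemma~\ref{lem:logn} and the recursive structure of Lemma~\ref{lem:recurse} do not hide a super-polynomial factor in $\log n$ or $k$. Since every step of both algorithms manipulates only $O(1)$ lattice points and performs $O(1)$ coordinate comparisons per iteration, this is easy to confirm, and so the proof reduces to the bookkeeping argument sketched above.
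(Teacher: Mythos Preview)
Your proposal is correct and follows the same approach as the paper. In fact, the paper does not give a separate proof of this theorem at all: it merely observes, in one sentence before the statement, that writing down a lattice point takes $k\log n$ time and that evaluating the circuit takes $\poly(\log n,k)$ time, and then asserts the result; your proposal supplies exactly the routine bookkeeping verification that the paper leaves implicit.
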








\section{Conclusion}

Our $O(\log^{2 \lceil k/3 \rceil} n)$ query algorithm for $k$-dimensional \tarski falsifies
prior conjectures that the problem required $\Omega(\log^k n)$
queries~\cite{DQY20,EPRY20}. This, of course, raises the question of what is
the query complexity of finding a Tarski fixed point? While our upper bound is tight
in three dimensions, it seems less likely to be the correct answer in higher
dimensions. Indeed, there seems to be a fairly wide range of possibilities. Is it
possible to show a $\log^{\Omega(k)} n$ query lower bound for the problem? Or perhaps there exists a
fixed parameter tractable algorithm that uses $O(f(k) \cdot \log^2 n)$ queries?
Both of those would be consistent with the known upper and lower bounds, and so
further research will be needed to close the gap.

Theorem~\ref{thm:product} provides a powerful new tool for reducing the query
complexity of the Tarski problem in high dimensions, since it allows a faster
algorithm for a constant-dimensional problem to reduce the query complexity of
the higher-dimensional problem. This means that determining the query complexity
of four- or five-dimensional Tarski problems could lead to even faster
algorithms for higher dimensions, and so further study of these problems seems
warranted. 


\newpage

\bibliography{references}

\begin{thebibliography}{10}

\bibitem{ChangLT08}
Ching{-}Lueh Chang, Yuh{-}Dauh Lyuu, and Yen{-}Wu Ti.
\newblock The complexity of {T}arski's fixed point theorem.
\newblock {\em Theor. Comput. Sci.}, 401(1-3):228--235, 2008.

\bibitem{DQY20}
Chuangyin Dang, Qi~Qi, and Yinyu Ye.
\newblock Computations and complexities of {T}arski's fixed points and
  supermodular games.
\newblock {\em CoRR}, abs/2005.09836, 2020.
\newblock Stanford tech report version appeared in 2012.

\bibitem{DangYe18-tech}
Chuangyin Dang and Yinyu Ye.
\newblock On the complexity of a class of discrete fixed point problems under
  the lexicographic ordering.
\newblock Technical report, 2018.
\newblock Technical Report, City University of Hong Kong, CY2018-3, 17 pages.

\bibitem{DangY18}
Chuangyin Dang and Yinyu Ye.
\newblock On the complexity of an expanded tarski's fixed point problem under
  the componentwise ordering.
\newblock {\em Theor. Comput. Sci.}, 732:26--45, 2018.

\bibitem{DangY20}
Chuangyin Dang and Yinyu Ye.
\newblock Erratum/correction to "on the complexity of an expanded tarski's
  fixed point problem under the componentwise ordering" [theor. comput. sci.
  732 {(2018)} 26-45].
\newblock {\em Theor. Comput. Sci.}, 817:80, 2020.

\bibitem{Echenique07}
Federico Echenique.
\newblock Finding all equilibria in games of strategic complements.
\newblock {\em J. Econ. Theory}, 135(1):514--532, 2007.

\bibitem{EPRY20}
Kousha Etessami, Christos~H. Papadimitriou, Aviad Rubinstein, and Mihalis
  Yannakakis.
\newblock {T}arski's theorem, supermodular games, and the complexity of
  equilibria.
\newblock In {\em Proc.\ of ITCS}, volume 151, pages 18:1--18:19, 2020.

\bibitem{FGHS20}
John Fearnley, Paul~W. Goldberg, Alexandros Hollender, and Rahul Savani.
\newblock The complexity of gradient descent: {CLS} = {PPAD} $\cap$ {PLS}.
\newblock In {\em Proc.\ of {STOC} (to appear)}, 2021.

\bibitem{FGMS20}
John Fearnley, Spencer Gordon, Ruta Mehta, and Rahul Savani.
\newblock Unique end of potential line.
\newblock {\em J. Comput. Syst. Sci.}, 114:1--35, 2020.

\bibitem{MilgromR90}
Paul Milgrom and John Roberts.
\newblock Rationalizability, learning, and equilibrium in games with strategic
  complementarities.
\newblock {\em Econometrica}, 58(6):1255--1277, 1990.

\bibitem{Tarski55}
Alfred Tarski.
\newblock A lattice-theoretical fixpoint theorem and its applications.
\newblock {\em Pacific Journal of Mathematics}, 5(2):285--309, 1955.

\bibitem{Topkis79}
Donald~M. Topkis.
\newblock Equilibrium points in nonzero-sum n-person submodular games.
\newblock {\em SIAM J. Control Optim}, 17:773--787, 1979.

\bibitem{Topkis98}
Donald~M. Topkis.
\newblock {\em Supermodularity and Complementarity}.
\newblock Princeton University Press, 1998.

\end{thebibliography}

\appendix

\section{Proof of Lemma~\ref{lem:tarskinew}}
\label{app:tarskinew}

Before we prove Lemma~\ref{lem:tarskinew}, we first prove the following
auxiliary lemma. It states that if $x$ is in the up set, and $i$ is a dimension
for which $f$ moves strictly upwards, then either the point $x'$ that is
directly above $x$ in dimension $i$ is also in the upset, or $x$ and $x'$
witness a violation of the order preservation of $f$.

\begin{lemma}
\label{lem:upsetpath}
Let $L$ be a lattice, $f : L \rightarrow L$ be a \tarski instance, $x \in L$ be
a point satisfying $x \in \Up(f)$, and $i$ be a dimension such that $x_i <
f(x)_i$. Define the point $x'$ so that
\begin{equation*}
x'_j = \begin{cases}
x_j + 1 & \text{if $j = i$,} \\
x_j & \text{if $j \ne i$.}
\end{cases}
\end{equation*}
Either $x' \in \Up(f)$, or $x$ and $x'$ witness a violation of the order
preservation of $f$.
\end{lemma}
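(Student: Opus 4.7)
The plan is to prove the contrapositive: assume $x' \notin \Up(f)$ and derive an order preservation violation between $x$ and $x'$. Since $x$ and $x'$ differ only in coordinate $i$, with $x'_i = x_i + 1$, we have $x \preceq x'$, so any failure of $f(x) \preceq f(x')$ is a valid \solnref{T2} witness.

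Suppose $x' \notin \Up(f)$, so there exists some dimension $j$ with $f(x')_j < x'_j$. I would split into two cases according to whether $j = i$ or $j \ne i$.

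In the case $j \ne i$, the definition of $x'$ gives $x'_j = x_j$, so $f(x')_j < x_j$. Since $x \in \Up(f)$ we have $x_j \le f(x)_j$, and chaining these yields $f(x')_j < f(x)_j$, i.e.\ $f(x)_j > f(x')_j$, which is an order preservation violation.

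In the case $j = i$, we have $f(x')_i < x'_i = x_i + 1$, so $f(x')_i \le x_i$ (since all coordinates are integers). On the other hand, the hypothesis of the lemma gives $x_i < f(x)_i$, hence $f(x)_i \ge x_i + 1 > x_i \ge f(x')_i$, again yielding $f(x)_i > f(x')_i$ and the desired violation.

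The proof is essentially a short case analysis and the only subtlety is correctly using the strict inequality $x_i < f(x)_i$ in the second case to obtain the integer gap $f(x)_i \ge x_i + 1$; there is no real obstacle. I would present it as a single short paragraph in the appendix.
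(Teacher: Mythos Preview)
Your proof is correct and is essentially identical to the paper's own argument: both note that $x \preceq x'$, assume some coordinate $j$ has $f(x')_j < x'_j$, and then split on whether $j = i$ or $j \ne i$, using $x \in \Up(f)$ in the latter case and the strict inequality $x_i < f(x)_i$ (together with integrality) in the former to obtain $f(x)_j > f(x')_j$. The only cosmetic difference is that you phrase it explicitly as a contrapositive, whereas the paper treats each coordinate separately and concludes $x' \in \Up(f)$ if no violation is found.
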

\begin{proof}
For dimension $i$, if $f(x')_i < x'_i$ then we have
\begin{equation*}
f(x')_i < x'_i \le f(x)_i,
\end{equation*}
where the second inequality follows from the assumption that $f(x)_i > x_i$, and
the fact that $x'_i = x_i + 1$. Hence, if $f(x')_i < x'_i$, then $x$ and $x'$
witness a violation of the order preservation of $f$ since $x \preceq x'$ but $x
\not\preceq f(x')$. 

For a dimension $j \ne i$, if $f(x')_j < x'_j$ then we have
\begin{equation*}
f(x')_j < x'_j = x_j \le f(x)_j,
\end{equation*}
where the equality follows from the fact that $x_j = x'_j$ by definition, and
the final inequality follows from the assumption that $x \in \Up(f)$. Hence, if 
$f(x')_j < x'_j$, then we have that $x \preceq x'$ while $f(x) \not\preceq
f(x')$, and so $x$ and $x'$ witness a violation of the order preservation of $f$.

From what we have proved above, we know that if $x$ and $x'$ do not violate
order preservation, then we have $x'_j \le f(x')_j$ for all dimensions $j$,
which implies that $x' \in \Up(f)$.
\end{proof}

We now prove Lemma~\ref{lem:tarskinew}.
Here we repeatedly apply Lemma~\ref{lem:upsetpath} to generate
a sequence of points in the up set that start at $a$ and monotonically increase
according to $\preceq$. If the path ends inside $L_{a, b}$ then we show that it
must end at a solution, while if it leaves $L_{a, b}$, then we show that
there is a violation of order preservation between $b$ and the point at
which the path leaves the sub-instance.

\begin{proof}
Since $a \in \Up(f)$, we have that either $a$ is a fixed point of $f$, or there
exists a dimension~$i$ such that $a_i < f(a)_i$.  In the former case we are
done, so let us assume that the latter is true. This means that we can apply
Lemma~\ref{lem:upsetpath} to obtain a point $a' = a + e_j$, where $e_j$ is the
unit vector in dimension $j$, such that either $a$ and $a'$ witness an order
preservation violation of $f$, or $a' \in \Up(f)$. 

By repeatedly applying the argument above, we can construct a sequence of points 
$$
a = a_1, a_2, a_3, \dots, a_p
$$
such that for all $i < p$ we have that 
and $a_{i} = a_{i-1} + e_j$
for some dimension $j$, and 
$a_i \in \Up(f)$.
The sequence ends at the point $a_p$ where we can no longer apply the argument,
which means that either $a_p$ and $a_{p-1}$ witness the order preservation
violation of $f$, or 
there is no index $j$ such that $(a_p)_j < f(a_p)_j$,
meaning that $a_p$ is a fixed point since $a_p \in \Up(f)$. Note that the sequence cannot be
infinite since $a_i \prec a_{i+1}$ for all $i$, and the lattice is finite.

If $a_p \preceq b$ then we are immediately done, since this implies that we have
one of the required solutions in the sub-instance $L_{a, b}$. 
On the other hand,
if $a_p \not\preceq b$, then let $i$ be the largest index such that $a_i \preceq
b$, and let $j$ be the dimension such that $a_{i+1} = a_i + e_j$. Note that we
have $b_j = (a_i)_j$, since otherwise we would have $a_{i+1} \preceq b$.
We have
\begin{equation*}
f(b)_j \le b_j = (a_i)_j < f(a_i)_j,
\end{equation*}
where the first inequality holds because $b \in \Down(f)$, and the final
inequality holds because $a_{i+1} = a_i + e_j$, which can only occur when
$(a_i)_j < f(a_i)_j$. Hence we have $a_i \preceq b$, but
$f(a_i) \not\preceq f(b)$, and so $a_i$ and $b$ witness a violation of the order
preservation of $f$. Furthermore, both $a_i$ and $b$ lie in the sub-instance
$L_{a, b}$, and so the proof is complete.

For the algorithm, note that the arguments above imply that it is sufficient to
find either $a_p$, or the index $i$ such that $a_i \preceq b$ and $a_{i+1} \not\preceq b$
($i$ must be unique if it exists, since $a_j \prec a_{j+1}$ for all $j$).
Each element of the sequence can be constructed by making a single query to $f$,
and since each step of the sequence strictly increases one coordinate of the
point, we have that there can be at most $\sum_{i = 1}^k (a_i - b_i)$ points
$a_i$ of the sequence satisfying $a_i \preceq b$. Thus the algorithm makes at
most

$O(\sum_{i = 1}^k (a_i - b_i))$ queries.
\end{proof}

\section{Proof of Lemma~\ref{lem:downsetwitness}}
\label{app:downsetwitness}

\begin{proof}
Observe that, since $d_i = b_i$, we have that $d$ and $b$ both lie in the same
one-dimensional slice. Moreover, the fact that $d_j \le f(d)_j$ implies that $d$
lies in the up set of this slice, while $f(b)_j \le b_j$ implies that $b$ lies
in the down set of this slice. Hence, we can apply Lemma~\ref{lem:tarskinew} to
either find a fixed point $p$ of this one-dimensional slice, or a violation of
order preservation. In the latter case, we are done since case two of the
lemma will have been fulfilled, so we will proceed assuming that $p$ exists.

There are now two cases to consider.
\begin{itemize}
\item \textbf{Case 1:} $\mathbf{p_i \le f(p)_i}$. 
There are two sub-cases.
\begin{itemize}
\item \textbf{Sub-case 1:} $\mathbf{p_3 \le f(p)_3}$. Note that $p_j = f(p)_j$, since $p$ is the
fixed point of the one-dimensional slice. Hence $p \in \Up(f)$, and so the first
case of the lemma has been fulfilled.

\item \textbf{Sub-case 2:} $\mathbf{p_3 > f(p)_3}$. Now we have
\begin{equation*}
f(p)_3 < p_3 = d_3 \le f(d)_3,
\end{equation*}
where the equality holds because $d$ and $p$ both lie in $s$, and the final
inequality holds by the requirements of a down set witness. Therefore we have $d
\preceq p$ but $f(d) \not\preceq f(p)$, and so $d$ and $p$ witness a violation
of the order preservation of $f$, and so the second case of the lemma is
satisfied.

\end{itemize}

\item \textbf{Case 2:} $p_i > f(p)_i$. Note that $p_j = f(p)_j$ since $p$ is a fixed
point of the one-dimensional slice. Hence $p$ is in the down set of the slice $s$,
and so the third condition of the lemma has been fulfilled.
\end{itemize}
\end{proof}

\section{Proof of Lemma~\ref{lem:upsetwitness}}
\label{app:upsetwitness}

\begin{proof}
This proof is exactly the same as the proof of Lemma~\ref{lem:downsetwitness},
but all inequalities have been flipped. We include it only for the sake of
completeness.

Observe that, since $a_i = u_i$, we have that $a$ and $u$ both lie in the same
one-dimensional slice. Moreover, the fact that $u_j \ge f(u)_j$ implies that $u$
lies in the down set of this slice, while $f(a)_j \ge a_j$ implies that $a$ lies
in the up set of this slice. Hence, we can apply Lemma~\ref{lem:tarskinew} to
either find a fixed point $p$ of this one-dimensional slice, or a violation of
order preservation. In the latter case, we are done since case two of the
lemma will have been fulfilled, so we will proceed assuming that $p$ exists.

There are now two cases to consider.
\begin{itemize}
\item \textbf{Case 1:} $\mathbf{p_i \ge f(p)_i}$. 
There are two sub-cases.
\begin{itemize}
\item \textbf{Sub-case 1:} $\mathbf{p_3 \ge f(p)_3}$. Note that $p_j = f(p)_j$,
	since $p$ is the fixed point of the one-dimensional slice. Hence $p \in
	\Down(f)$, and so the first case of the lemma has been fulfilled.

\item \textbf{Sub-case 2:} $\mathbf{p_3 < f(p)_3}$. Now we have
\begin{equation*}
f(p)_3 > p_3 = a_3 \ge f(a)_3,
\end{equation*}
where the equality holds because $a$ and $p$ both lie in $s$, and the final
inequality holds by the requirements of an up set witness. Therefore we have $a
\preceq p$ but $f(a) \not\preceq f(p)$, and so $a$ and $p$ witness a violation
of the order preservation of $f$, and so the second case of the lemma is
satisfied.

\end{itemize}

\item \textbf{Case 2:} $\mathbf{p_i < f(p)_i}$. Note that $p_j = f(p)_j$ since $p$ is a fixed
point of the one-dimensional slice. Hence $p$ is the up set of the slice $s$,
and so the third condition of the lemma has been fulfilled.
\end{itemize}
\end{proof}

\section{Proof of Lemma~\ref{lem:invariant}}
\label{app:invariant}

\begin{proof}
We can find a point $x \in L_{a, b}$ that satisfies $x \in \Up(f_s)$ in the
following way. By the invariant, either $a$ already satisfies this condition, or
we can apply Lemma~\ref{lem:upsetwitness} to obtain one of the three possible
cases from that lemma. Cases one and two immediately fulfill the requirements
of this lemma, and so we are done immediately in those cases, while the third
case gives us the point $x$.

Symmetrically, we can find a point $y \in L_{a, b}$ that satisfies $y \in
\Down(f_s)$, since either $b$ is such a point, or we can invoke
Lemma~\ref{lem:downsetwitness} to either immediately fulfil the requirements of
this lemma, or produce the point $y$.

Note further that we have $x \preceq y$. If either $a = x$ or $b = y$ then this
holds due to the promises given by the invariant. When we have both an up and
down set witness we have $x \preceq u \preceq d \preceq y$,
where the first and third inequalities are come from
Lemmas~\ref{lem:upsetwitness} and~\ref{lem:downsetwitness}, while $u \preceq d$
is promised by the invariant.

Hence we can apply Lemma~\ref{lem:tarskinew} to the sub-instance $L_{x, y}
\subseteq L_{a, b}$, which will either give us a violation of order
preservation, which will immediately satisfy the second condition of this lemma,
or a fixed point $p \in L_{x, y}$ of the slice $s$. We now do a case analysis on the third
dimension. 
\begin{itemize}
\item If $p_3 \le f(p)_3$, then $p \in \Up(f)$ since $p_1 = f(p_1)$ and $p_2 =
f(p_2)$. 
\item If $p_3 > f(p)_3$, then $p \in \Down(f)$ since $p_1 = f(p_1)$ and $p_2 =
f(p_2)$. 
\end{itemize}
Hence, in either case the first condition of this lemma is satisfied.
\end{proof}

\section{Proof of Lemma~\ref{lem:logn}}
\label{app:logn}

We shall begin by providing a proof for the case where $p_1 = b_1$ and $p_1 <
f(p)_1$. The other three cases will be proved symmetrically. There are two cases
to consider.

\begin{enumerate}
\item If $b \in \Down(f_s)$, or if there is a top-boundary down set witness $(d,
b)$, then we have that $f(b)_1 \le b_1$. Therefore we have
\begin{equation*}
f(b)_1 \le b_1 = p_1 < f(p)_1,
\end{equation*}
so $p \preceq b$ but $f(p) \not\preceq f(b)$, and therefore we have a violation
of order preservation.

\item If there is a right-boundary down set witness $(d, b)$, then note that by
the properties of a down set witness we have $d_2 \le f(d)_2$ and $d_3 \le
f(d)_3$. There are now two cases to consider.
\begin{enumerate}
\item If $d_1 \le f(d)_1$, then $d \in \Up(f)$, and so $d$ can be returned by
the inner algorithm.

\item If $d_1 > f(d)_1$ then we have 
\begin{equation*}
f(d)_1 < d_1 = p_1 < f(p)_1,
\end{equation*}
so we have $p \preceq d$ but $f(p) \not\preceq f(d)$, and therefore we have a
violation of order preservation.
\end{enumerate}
\end{enumerate}

The other three cases can be proved by the same reasoning.
\begin{itemize}

\item The case where $p_2 = b_2$ and $p_2 < f(p)_2$ can be proved by exchanging
dimensions 1 and 2.
\item The case where $p_1 = a_1$ and $p_1 > f(p)_1$ can be proved by flipping
all inequalities.
\item The case where $p_2 = a_2$ and $p_2 > f(p)_2$ can be proved by exchanging
dimensions 1 and 2, and also flipping all inequalities.

\end{itemize}

\section{Proof of Lemma~\ref{lem:init}}
\label{app:init}

\begin{proof}
We will show that either $a \in \Up(f_s)$ and $b \in \Down(f_s)$, or that a
violation of order preservation can be found.

We start by showing that either we have a violation of order preservation, or we
have that $a \in \Up(f_s)$. To check this, we only need
to inspect dimensions $i \in \{1, 2\}$. Note that if $f(a)_i < a_i$ for
some $i$, then 
\begin{equation*}
f(a)_i < a_i = x_i \le f(x)_i,
\end{equation*}
where the equality holds from the definition of $a$, and the final inequality
holds since $x \in \Up(f)$ is a requirement for calling the inner algorithm. Thus, if $f(a)_i < a_i$ for some dimension $i \in
\{1, 2\}$ then we have $x \preceq a$ and $f(x) \not\preceq f(a)$, and so we have that $x$ and $a$ witness a violation of the order preservation of $f$.
Otherwise, we have $f(a)_i \le a_i$ for all $i \in \{1, 2\}$, and therefore $a
\in \Up(f_s)$.

We can apply the same reasoning symmetrically to $b$ and $y$. If $b_i > f(b)_i$
for some $i \in \{1, 2\}$ then 
\begin{equation*}
f(b)_i > b_i = y_i \ge f(y)_i,
\end{equation*}
where the equality holds from the definition of $b$ and the final inequality
holds since $y \in \Down(f)$ is a requirement for calling the inner algorithm. Thus we would have $b \preceq y$ and $f(b)
\not\preceq f(y)$, and so either $b$ and $y$ witness a violation of the order
preservation of $f$, or $b \in \Down(f_s)$. 

At this stage we have either satisfied the second or third conditions of the
lemma, or we have that $a \in \Up(f_s)$ and $b \in \Down(f_s)$ and so the
invariant on $L_{a, b}$ is satisfied.
\end{proof}

\section{Proof of Theorem~\ref{thm:tight3d}}
\label{app:tight3d}

\begin{proof}
We will show that, if solving a $(k-1)$-dimensional \tarski instance requires
$q$ queries, then solving $k$-dimensional \tarski also requires $q$ queries.

Let $L^{k-1} = \lat(n_1, n_2, \dots, n_{k-1})$ be a $(k-1)$-dimensional lattice,
and let $f^{k-1} : L^{k-1} \rightarrow L^{k-1}$ be a \tarski instance over $L^{k-1}$
that requires $q$ queries to solve. Let $L^{k} = \lat(n_1, n_2, \dots, n_k)$ be
a $k$-dimensional lattice, where $n_k$ is any positive integer. We build the
function $f^{k} : L^{k} \rightarrow L^{k}$ in the following way. For each point $x \in L^{k}$ we define
\begin{equation*}
f^{k}(x)_i = \begin{cases}
f^{k-1}(x)_i & \text{if $i < k$,} \\
x_k & \text{if $i = k$.}
\end{cases}
\end{equation*}
Observe that $x = f^{k}(x)$ if and only if $x$ is also a fixed point of
$f^{k-1}$, and that $x$ and $y$ violate the order preservation of $f^{k}$ if and
only if $x$ and $y$ violate the order preservation of $f^{k-1}$. 
Moreover, every query to 
$f^{k}$ can be answered by making exactly one query to $f^{k-1}$. Hence, in
order solve the \tarski problem for $f^{k}$, we must make at least $q$ queries.

Thus, Theorem~\ref{thm:tight3d} follows from the $\Omega(\log^2 n)$ query lower 
bound of Etessami et al.~\cite{EPRY20} for \tarski in dimension 2. 
\end{proof}

\end{document}